\documentclass[journal]{IEEEtran}
\usepackage{cite}

\usepackage[ruled,vlined]{algorithm2e}

\usepackage{mathtools, amssymb, amsthm}
\newtheorem{theorem}{Theorem}
\newtheorem{corollary}{Corollary}
\newtheorem*{corollary*}{Corollary}

\newtheoremstyle{remarkstyle}
  {3pt}   % space above
  {3pt}   % space below
  {\normalfont} % body font
  {}      % indent
  {\itshape} % heading font
  {:}     % punctuation after heading
  {0.5em} % space after heading
  {\thmname{#1}~\thmnumber{#2}}

\theoremstyle{remarkstyle}
\newtheorem{remark}{Remark}

\usepackage{tabularx, graphicx}
\usepackage[caption=false,font=footnotesize]{subfig}
\usepackage[table,dvipsnames]{xcolor}
\usepackage{pifont}  % For \ding

\usepackage{pgfplots}
\pgfplotsset{compat = newest}
\usetikzlibrary{positioning, calc, arrows, decorations.pathreplacing, patterns, mindmap, trees, shapes, backgrounds}

\newcommand{\SCALE}{0.25}   % global scale parameter
\newcommand{\SCALETWO}{0.58}   % global scale parameter
\usepackage{fontawesome5}
\usepackage{array}

\definecolor{palecgray}{gray}{0.95}      % pale cool gray panel
\definecolor{burntumber}{RGB}{138, 51, 36} % dark reddish-brown
\definecolor{chestnut}{HTML}{954535}     % warm brown 
\definecolor{cobalt}{RGB}{0, 71, 171}      % cobalt blue 
\definecolor{bazaar}{HTML}{98777B}       % muted mauve-brown
\definecolor{antiquebrass}{HTML}{CD9575} % "Antique Brass" 
\definecolor{ashgrey}{HTML}{B2BEB5}      % soft grey-green
\definecolor{mcrnyellow}{HTML}{FFBE00} % MCRN yellow
\definecolor{mcrnred}{HTML}{B92F0A}    % MCRN red
\definecolor{mcrnbrown}{HTML}{54191B}  % MCRN brown
\definecolor{mcrnblue}{HTML}{2660AE}   % MCRN blue
\definecolor{rowcolor1}{RGB}{230, 248, 230}
\definecolor{rowcolor2}{RGB}{255, 218, 195}

\title{Mitigating Heterogeneity-Induced Drift in Hierarchical Sign-Based Federated Learning}
\author{{Amirreza Kazemi, Seyed Mohammad Azimi-Abarghouyi,~\IEEEmembership{Member,~IEEE}, Gabor Fodor,~\IEEEmembership{Fellow,~IEEE}, \\and Carlo Fischione,~\IEEEmembership{Fellow,~IEEE}}

	\thanks{A. Kazemi, G. Fodor, and C. Fischione are with the School of Electrical Engineering and Computer Science, KTH Royal Institute of Technology, Stockholm, Sweden (e-mails: \{seykaz, gaborf, carlofi\}@kth.se). S. M. Azimi-Abarghouyi is with the Department of Electrical Engineering, Chalmers University of Technology, Gothenburg, Sweden (e-mail: azimimo@chalmers.se). This research has received funding from the European Union's Horizon Europe research project BATTwin under grant agreement No. 101137954.}}
    
\begin{document}
\bstctlcite{IEEEexample:BSTcontrol}
\maketitle
\begin{abstract}
Hierarchical federated learning (HFL) is well suited for large-scale wireless and Internet of Things systems, where devices communicate with nearby edge servers before reaching the cloud. In these environments, uplink bandwidth and latency impose strict communication constraints, making aggressive gradient compression essential. One-bit sign-based stochastic gradient descent methods provide an attractive solution in flat federated settings, but their behavior in hierarchical edge--cloud architectures remains insufficiently understood, especially under inter-cluster data heterogeneity. To address this gap, we develop a sign-based HFL framework in which devices transmit binary stochastic-gradient signs to edge servers, edge servers apply majority voting, and the cloud periodically aggregates edge models. Our analysis reveals that inter-cluster heterogeneity induces a persistent bias term in the convergence bound, reflecting the drift of edge models toward local objectives. This term cannot be removed by increasing the number of training rounds or by tuning standard hyperparameters alone. We therefore propose \(\mathtt{DC\text{-}HierSignSGD}\), a drift-corrected sign-based HFL algorithm in which devices apply a cloud-assisted gradient correction before taking the sign. We show that this pre-sign correction mitigates the non-vanishing heterogeneity-induced bias while preserving binary device--edge communication during the repeated local sign-update steps. Experiments under severe inter-cluster heterogeneity demonstrate that \(\mathtt{DC\text{-}HierSignSGD}\) improves the stability and accuracy of sign-based HFL and achieves performance comparable to full-precision hierarchical SGD with substantially lower device--edge communication.
\end{abstract}

\begin{IEEEkeywords}
Hierarchical federated learning, edge--cloud networks, SignSGD, data heterogeneity, communication-efficient learning.
\end{IEEEkeywords}

\section{Introduction}

Federated learning (FL) allows distributed devices to collaboratively train a global model without sharing raw data, reducing privacy risks. 
This approach is particularly advantageous when data are expensive to collect or challenging to aggregate. Moreover, FL enhances computational efficiency by allowing multiple devices to train the model concurrently \cite{mcmahan2017communication}.

Despite its advantages, the standard FL paradigm struggles with scalability. As the number of participating devices grows, communication rounds become progressively slower, uplink congestion increases, and stragglers increasingly dominate the overall round latency. To address these issues, numerous works in FL have introduced modified algorithms aimed at mitigating the effects of the aforementioned challenges \cite{reisizadeh2020fedpaq,reisizadeh2022straggler,raftopoulou2024agent,mahmoudi2025accelerating}. 

A recent structural evolution of the standard FL framework is hierarchical federated learning (HFL), in which devices send their updates to intermediate edge servers rather than directly to the cloud server \cite{castiglia2021multi,liu2022hierarchical,azimi2025hierarchical,azimi2024hierarchical}. In its typical form, HFL employs a two-stage aggregation process: edge servers first combine updates from their associated devices, and the cloud then aggregates the outputs of multiple edge servers to obtain the global model. By introducing this intermediate layer, HFL has the potential to reduce communication load, improve scalability, and enable more efficient distributed learning in large-scale networks.

Notwithstanding its benefits, HFL continues to face a critical bottleneck involving uplink communication between devices and edge servers. Transmitting full-precision gradients or model updates is costly, especially over noisy or bandwidth-constrained communication links where frequent two-way transmissions take place. As a result, communication-efficient learning techniques have become indispensable for HFL systems.
Most existing studies in FL and HFL improve communication efficiency by applying general quantization to model parameters or gradients during uplink transmission.

While such schemes reduce communication costs, they still require multiple bits per model dimension, which can be prohibitive due to the scale of modern learning models, even with a modest number of devices. This limitation typically occurs in settings constrained by bandwidth, latency, or energy, such as large-scale Internet of Things deployments with severely limited uplink rates~\cite{bonawitz2019towards, wang2019adaptive}, wireless edge-learning systems under stringent bandwidth budgets~\cite{yang2020federated, zhu2020broadband}, and distributed sensor networks, such as in fault prediction in factories (e.g., for electrical battery manufacturing processes or industrial automation processes), where nodes can transmit only a few bits per reporting cycle~\cite{fang2020energyefficient}. The challenge is further amplified in emerging applications involving large language models, where adding even a single extra bit per dimension substantially increases the communication load, rendering multi-bit schemes ineffective and making one-bit, sign-based methods particularly attractive.

% \begin{table*}[t]
% \centering
% \caption{Comparison of prominent existing works in HFL}
% \label{tab:literature_comparison}
% \begin{tabular}{|c|c|c|c|c|} % You can adjust 10cm as needed
% \hline
% \textbf{Work} & \textbf{Convergence analysis} & \textbf{Communication efficiency} & \textbf{Inter-cluster heterogeneity} \\
% \hline
% \cite{liu2022hierarchical}  &  \cmark  &  \cmark   &  \xmark\\
% \hline 
% \cite{azimi2025hierarchical}  &  \xmark  &  \xmark  &   \xmark   \\
% \hline
% \cite{azimi2024hierarchical} & \cmark  & \cmark  &  \cmark \\ 
% \hline
% \cite{jiang2024convergence} &  \cmark  &  \xmark  &   \cmark  \\
% \hline
% \cite{liu2023adaptive} & \cmark  &  \cmark & \xmark  \\  
% \hline
% This work & \cmark  &  \cmark & \cmark  \\  
% \hline
% \end{tabular}
% \end{table*}

A promising way to achieve this goal is to apply aggressive compression to the transmitted updates. In this context, sign-based stochastic gradient descent (SignSGD) is especially appealing because each device sends only the coordinate-wise signs of its stochastic gradients rather than full-precision values. This reduces the uplink payload of each device from \(d\) floats to $d$ bits ($d$ being the dimension of the model).
To appreciate the magnitude of this reduction, consider a baseline in which
devices transmit full-precision 32-bit gradient values or model updates.
Matching the communication budget of SignSGD with a sparsifier would require
retaining only about \(1/32\) of the entries, i.e., roughly \(3\%\), even before accounting for the additional cost of transmitting the support pattern. Such extreme compression is prone to instability and often degrades convergence due to the significant variance it introduces~\cite{liu2022hierarchical,alistarh2017qsgd}, as it discards essential information contained in the entries. But the directional information of gradients can be somewhat preserved even when reduced to a single bit per coordinate. This makes sign-based gradient communication particularly well suited for federated systems with severe communication constraints.

% Moreover, SignSGD can be viewed as a form of quantization possessing a fundamentally different and unique characterization that cannot be captured by the general or specific quantization schemes widely studied in many FL works \cite{bernstein2018signsgd}. In particular, existing quantization-based methods typically rely on simple statistical characterizations, such as modeling quantization error through its mean and variance, which makes their incorporation into convergence analysis relatively straightforward, since the quantization error appears as an additive noise term. In contrast, SignSGD requires a highly delicate, integer-based analysis, as its behavior is governed by sign operations and majority-vote mechanisms rather than additive quantization noise. Such treatment necessitates a new theoretical and analytical framework in HFL that is fundamentally different from those used in previous works.

Early works studied SignSGD in the classical distributed stochastic gradient descent (SGD) context, showing that even though the sign is a biased operator, convergence (in homogeneous or well-controlled settings) to stationary points is feasible under suitable assumptions \cite{bernstein2018signsgd, karimireddy2019error, safaryan2021stochastic}. More recent advances improve these results via variance reduction to tighten convergence bounds in nonconvex settings~\cite{jiang2024efficient, chzhen2023signsvrg}. Moreover, momentum variants of SignSGD have been shown to enjoy convergence under weaker smoothness conditions~\cite{sun2023momentum}. However, nearly all of the literature on SignSGD and especially its variants assumes the standard flat (device--server) communication topology, while in many real-world settings, networks naturally follow multi-layer hierarchical architectures \cite{wang2021potential,dhillon2012modeling}.

In hierarchical networks, a natural question arises: how should sign-based learning be designed when binary device-level updates are first aggregated at edge servers and then propagated through cloud-level model averaging? This question is nontrivial because inter-cluster heterogeneity can cause edge models to drift toward their local objectives, making naive sign-based aggregation insufficient. Unlike the flat distributed setting considered in \cite{SignsgdFV}, where heterogeneity is mainly reflected through unequal mini-batch sizes and different sign reliabilities, hierarchical settings introduce gradient dissimilarity across edge-level objectives. Such mismatch creates systematic edge drift, which cannot be removed by reweighting sign votes alone. Hence, addressing this issue requires a rigorous formulation that captures the interaction among one-bit updates, majority voting, cloud aggregation, and gradient dissimilarity across edge clusters. Developing such a framework, analyzing its convergence behavior, and introducing a drift-correction mechanism constitute the main goals of this work.

% However, realizing this potential requires a precise convergence analysis that reflects real-world operating conditions. Providing such an analysis is the main goal of this paper.

\subsection{Related Literature}

In contrast to the extensive literature on standard FL, research on hierarchical FL remains comparatively limited. For example, \cite{liu2022hierarchical} studies a quantized HFL algorithm, establishes a tight convergence bound for nonconvex objectives, and derives system-design insights such as adaptive aggregation intervals and device--edge association strategies. However, its convergence analysis does not fully address non-IID data distributions. Another study, \cite{azimi2025hierarchical}, develops an HFL algorithm from a different optimization perspective, although its convergence guarantees are limited to a restricted setting. Quantized communication under heterogeneous data is considered in \cite{azimi2024hierarchical}, which analyzes convergence and identifies aggregation intervals that improve communication efficiency and learning accuracy. More recent extensions to multi-layer HFL with layer-specific quantization generalize the theory to deeper aggregation graphs and characterize how local iteration counts and quantization parameters should scale~\cite{azimi2025multi}. Separately, \cite{jiang2024convergence} develops a unified convergence framework for HFL under partial participation and data heterogeneity, without considering quantized communication. Beyond quantization, model pruning has been used to reduce the number of transmitted parameters by eliminating redundant entries~\cite{liu2023adaptive}. 

Several works have also combined HFL with over-the-air computation, showing that analog aggregation and hierarchical clustering can improve scalability and robustness to interference and data heterogeneity in wireless networks~\cite{aygun2024over,azimi2024scalable,azimiOTA2024hierarchical}. Context-aware and scheduling-driven frameworks further improve training by dynamically managing device participation and communication resources~\cite{wen2022joint,zhang2024device}.
% Other studies have examined HFL in communication-constrained networked systems. For instance, unmanned aerial vehicle-assisted HFL has been analyzed under wireless channel effects using an unbiased aggregation mechanism \cite{zhagypar2025uav}, while handover-aware HFL has been investigated for open radio access network-based next-generation mobile networks~\cite{singh2025user}. 
Other works have studied HFL from a network-system perspective. User--edge association under statistical and network-topology constraints is studied in~\cite{mhaisen2022optimal}, and UAV-assisted two-tier HFL architectures are considered in~\cite{wang2024uav}.
Practical challenges such as unbalanced edge regions are investigated in~\cite{xu2024adaptive}, while loss-based heterogeneity in wireless HFL systems is studied in~\cite{ye2024fedhelo}. 
Heterogeneity-aware client association and staleness control have also been proposed to improve convergence in practical HFL deployments \cite{wu2023hiflash}. More recently, multi-timescale gradient correction has been introduced to mitigate model drift caused by data heterogeneity across different hierarchical levels~\cite{fang2024hierarchical}.

Despite these advances, sign-based methods remain largely unexplored in HFL, even though they offer substantial communication savings. In particular, their convergence behavior is not well understood when edge servers hold heterogeneous data distributions, where local sign updates may induce systematic drift from the global objective.

% While these works introduce valuable methods for HFL, they also exhibit limitations tied to key challenges such as incomplete convergence analysis, restricted bandwidth, and not considering data heterogeneity.

% A recurring issue is that data heterogeneity, especially across devices linked to different edge servers, adds another layer of difficulty, since in realistic deployments devices linked to different edge servers often hold substantially different data distributions.
%\vspace{-4pt}

\subsection{Contributions}

To our knowledge, this is the first work to develop a heterogeneity-aware sign-based learning framework for hierarchical edge--cloud networks. The proposed framework explicitly accounts for the interaction among binary device--edge updates, edge-level majority voting, periodic cloud aggregation, and inter-cluster gradient dissimilarity. Our main contributions are outlined below.

\begin{itemize}
\smallskip
    \item
    We develop a sign-based HFL framework for hierarchical edge--cloud networks, where devices transmit only stochastic-gradient signs during local training, edge servers aggregate them through majority vote, and the cloud periodically averages the resulting edge models. This provides a communication-efficient baseline for studying sign-based learning in hierarchical edge--cloud architectures.

\smallskip
    \item
    We provide a nonconvex convergence analysis showing how local aggregation, cloud aggregation, stochastic-gradient noise, and inter-cluster gradient dissimilarity affect convergence. The analysis reveals that, unlike HFL schemes based on full-precision SGD updates or their conventionally quantized variants, naive sign-based HFL suffers from a persistent heterogeneity-induced drift term, which cannot be removed by increasing the number of global rounds or by tuning the standard hyperparameters alone.

\smallskip
    \item
    To counter the effect of edge-level drift, we introduce \(\mathtt{DC\text{-}HierSignSGD}\), a drift-corrected sign-based HFL algorithm in which devices apply a cloud-assisted gradient correction before taking the sign. We analyze its convergence behavior, extend the result to the majority-vote setting, and show experimentally that the correction improves the stability and accuracy of sign-based HFL under inter-cluster heterogeneity while preserving binary device--edge communication during local training.
\end{itemize}

% \medskip
% The rest of the paper is organized as follows. Section~\ref{sec:problem} introduces the HFL framework, formalizes the global and edge-level objectives, and discusses the communication constraints motivating sign-based optimization. Section~\ref{sec:algorithm} presents the baseline sign-based HFL algorithm and its convergence analysis. Section~\ref{sec:DC-algorithm} develops a drift-corrected solution to mitigate the inter-cluster heterogeneity effect observed in the baseline method. Section~\ref{sec:simulations} presents numerical results evaluating the convergence behavior, stability, and comparative performance of the proposed method.
% Finally, Section~\ref{sec:conclusion} concludes the main results of the paper.

%\medskip
\textit{Notation:} Throughout the text,  we use bold lowercase letters and italic letters to indicate vectors and scalars, respectively. 
$[\mathbf{a}]_i$ is the $i$th element of $\mathbf{a}$.
The operator $\operatorname{sgn}(\cdot)$ represents the element-wise sign function.
A vector norm and its dual are denoted by $\|\cdot\|$ and $\|\cdot\|_\ast$, respectively. 
The inner product of two vectors is expressed by $\langle \cdot, \cdot \rangle$.
The expectation operator is denoted by $\mathbb{E}\{\cdot\}$. For any function \(f\), \(\nabla f\) indicates its gradient. The hat notation $\hat{({\cdot})}$ denotes an estimate of a given variable. Also, see Table \ref{tab:notation} for a more detailed summary of the notation and symbols used throughout the paper.

\begin{table}[t]
\caption{Summary of important notation}
\label{tab:notation}
\centering

\rowcolors{2}{rowcolor1}{palecgray}
\renewcommand{\arraystretch}{1.35}

% First col: fixed-ish width for symbols
% Second col: X = stretchable description column
\begin{tabularx}{0.97\columnwidth}{
  >{\centering\arraybackslash}m{1.4cm}
  >{\raggedright\arraybackslash}X
}

% --- HEADER ROW (both cells centered in the blue bar) ---
\rowcolor{cobalt!80}
\multicolumn{1}{c}{\color{white}\textbf{Symbol}} &
\multicolumn{1}{c}{\color{white}\textbf{Description}} \\
% --------------------------------------------------------

$Q$ & Number of edge servers \\
$\mathcal{V}^q$ & Set of devices managed by edge server $q$ \\
$\mathcal{D}_{qk}$ & Dataset belonging to device $k$ of cluster $q$ \\
$|\mathcal{D}_{qk}|$ & Size of the local dataset \\
$D_q$ & Number of data samples belonging to cluster $q$ \\
$N$ & Total number of samples in the hierarchical network \\
$d$ & Dimension of the model parameter vector \\
$\mathbf{w}$ & Global model parameter vector \\
$\mathcal{L}(\mathbf{w}; \boldsymbol{\xi})$ & Loss function for a single sample vector $\boldsymbol{\xi}$ \\
$f_{qk}(\mathbf{w})$ & Local loss function of device $k$ of cluster $q$ \\
$\mathcal{F}_q(\mathbf{w})$ & Loss function at edge server $q$ \\
$\mathcal{F}(\mathbf{w})$ & Global loss function \\
$T_G$ & Number of global rounds \\
$T_E$ & Number of local steps per global round \\
$t$ & Index of global rounds \\
$\tau$ & Index of local steps \\
$\mathbf{w}^{(t)}$ & Global model at iteration $t$ \\
$\mathbf{v}_q^{(t,\tau)}$ & Edge model $q$ at iteration $(t,\tau)$ \\
$\hat{\mathbf{g}}_{qk}^{(t,\tau)}$ & Stochastic gradient \\
$\mu$ & Step-size \\
$B$ & Batch-size \\
$L$ & The smoothness constant \\
$\sigma^2$ & Gradient component variance bound \\
$\zeta$ & Gradient dissimilarity constant \\
$\mathbf{c}_q^{(t)}$, $\mathbf{c}^{(t)}$ & Edge-level and global gradient anchors, respectively \\
$\boldsymbol{\delta}_q^{(t)}$ & Drift-correction vector at edge \(q\) \\
$\widetilde{\mathbf{s}}_q^{(t,\tau)}$ & Corrected sign vector \\
$\rho$ & Correction-strength parameter
\end{tabularx}
\end{table}

\section{Hierarchical Edge--Cloud FL} \label{sec:problem}
The hierarchical structure typically consists of two layers. The first (or top) layer enables communication between the cloud server and the edge servers, which function as intermediate aggregators. This layer is responsible for transmitting model parameters for the purpose of cloud aggregation.
The second layer establishes connections between edge devices and their corresponding edge servers. Over the communication channel in this layer, devices exchange privacy-preserving updates with their edge servers, which then update the local models accordingly. This process mirrors the behavior of a conventional FL framework. We assume reliable, high-capacity edge--cloud backhaul links, so their physical-layer impairments are not modeled. Nevertheless, edge--cloud synchronization is not cost-free, as each cloud aggregation incurs backhaul traffic, coordination overhead, and additional exposure of intermediate model updates. Thus, even though device--edge uplink is the main bottleneck, avoiding cloud synchronization after every local step remains beneficial.

\begin{figure}[t]
  \centering
  % \begin{figure}[!t]
%\centering
\scalebox{0.47}{
\begin{tikzpicture}[x=0.65pt,y=0.65pt]

% ==== Cluster 1 ====
\begin{scope}[shift={(3cm,0cm)}, scale=0.5]
    \begin{scope}[shift={(0cm,0cm)}]   \input{Tikz/Device_1} \end{scope}
    \begin{scope}[shift={(2cm,0cm)}]   \input{Tikz/Device_2} \end{scope}
    %\begin{scope}[shift={(0cm,-3cm)}]  \input{Tikz/Device_3} \end{scope}
    \begin{scope}[shift={(0cm,-3cm)}]  \input{Tikz/Device_K} \end{scope}
    \node[rotate=70] at (4.3cm,4.5cm) {\huge $.\,.\,.\,.\,.$};
    \begin{scope}[shift={(-7.2cm,0.6cm)}, scale=2.39]
    %Shape: Polygon Curved [id:ds9690632654418511] 
    \draw [dashed, ultra thick, color=mcrnbrown]   (139.86,43.29) .. controls (159.86,33.29) and (284.86,40.29) .. (264.86,60.29) .. controls (244.86,80.29) and (241.86,106.29) .. (261.86,136.29) .. controls (281.86,166.29) and (154.86,169.29) .. (134.86,139.29) .. controls (114.86,109.29) and (119.86,53.29) .. (139.86,43.29) -- cycle ;
    \end{scope}
    \draw (150,90) node {Cluster $1$}; 
\end{scope}

% ==== Cluster 2 ====
\begin{scope}[shift={(8cm,0cm)}, scale=0.5]
    \begin{scope}[shift={(0cm,0cm)}]   \input{Tikz/Device_1} \end{scope}
    \begin{scope}[shift={(2cm,0cm)}]   \input{Tikz/Device_2} \end{scope}
    %\begin{scope}[shift={(0cm,-3cm)}]  \input{Tikz/Device_3} \end{scope}
    \begin{scope}[shift={(0cm,-3cm)}]  \input{Tikz/Device_K} \end{scope}
    \node[rotate=70] at (4.3cm,4.5cm) {\huge $.\,.\,.\,.\,.$};
    \begin{scope}[shift={(-7.2cm,0.6cm)}, scale=2.39]
    %Shape: Polygon Curved [id:ds9690632654418511] 
    \draw [dashed, ultra thick, color=mcrnbrown]   (139.86,43.29) .. controls (159.86,33.29) and (284.86,40.29) .. (264.86,60.29) .. controls (244.86,80.29) and (241.86,106.29) .. (261.86,136.29) .. controls (281.86,166.29) and (154.86,169.29) .. (134.86,139.29) .. controls (114.86,109.29) and (119.86,53.29) .. (139.86,43.29) -- cycle ;
    \end{scope}
    \draw (150,90) node {Cluster $2$}; 
\end{scope}

\draw (575,130) node {\Huge $\cdots\cdots$};

% ==== Cluster Q ====
\begin{scope}[shift={(15cm,0cm)}, scale=0.5]
    \begin{scope}[shift={(0cm,0cm)}]   \input{Tikz/Device_1} \end{scope}
    \begin{scope}[shift={(2cm,0cm)}]   \input{Tikz/Device_2} \end{scope}
    %\begin{scope}[shift={(0cm,-3cm)}]  \input{Tikz/Device_3} \end{scope}
    \begin{scope}[shift={(0cm,-3cm)}]  \input{Tikz/Device_K} \end{scope}
    \node[rotate=70] at (4.3cm,4.5cm) {\huge $.\,.\,.\,.\,.$};
    \begin{scope}[shift={(-7.2cm,0.6cm)}, scale=2.39]
    %Shape: Polygon Curved [id:ds9690632654418511] 
    \draw [dashed, ultra thick, color=mcrnbrown]   (139.86,43.29) .. controls (159.86,33.29) and (284.86,40.29) .. (264.86,60.29) .. controls (244.86,80.29) and (241.86,106.29) .. (261.86,136.29) .. controls (281.86,166.29) and (154.86,169.29) .. (134.86,139.29) .. controls (114.86,109.29) and (119.86,53.29) .. (139.86,43.29) -- cycle ;
    \end{scope}
    \draw (150,90) node {Cluster $Q$}; 
\end{scope}

%== Central Server ====
\begin{scope}[shift={(7.7cm,5cm)}]
\input{Tikz/Central_server}
\end{scope}

% ==== Mini NNs: global aggregation (top-right) ====
\begin{scope}[scale=2,shift={(3.3cm,5.75cm)}]
  \tikzstyle{neuron}=[circle, draw=black, line width=0.4pt, minimum size=5pt, inner sep=0pt]
  \tikzstyle{input}=[neuron, fill=bazaar]
  \tikzstyle{hidden}=[neuron, fill=antiquebrass]
  \tikzstyle{output}=[neuron, fill=ashgrey]
  \node at (15,50) {\color{mcrnblue} Global Aggregation};
  \node[input] (I1) at (0,10) {};
  \node[input] (I2) at (0,25) {};
  \node[input] (I3) at (0,40) {};
  \node[hidden] (H1) at (15,15) {};
  \node[hidden] (H2) at (15,25) {};
  \node[hidden] (H3) at (15,35) {};
  \node[output] (O1) at (30,20) {};
  \node[output] (O2) at (30,30) {};
  \foreach \i in {1,2,3} {
    \foreach \h in {1,2,3} { \draw[line width=0.3pt, gray!70] (I\i) -- (H\h); }
    \draw[line width=0.3pt, gray!70] (H\i) -- (O1);
    \draw[line width=0.3pt, gray!70] (H\i) -- (O2);
  }
\end{scope}

% ==== Mini NNs: edge aggregation ====
\begin{scope}[scale=2,shift={(0.6cm,3.55cm)}]
  \tikzstyle{neuron}=[circle, draw=black, line width=0.4pt, minimum size=5pt, inner sep=0pt]
  \tikzstyle{input}=[neuron, fill=bazaar]
  \tikzstyle{hidden}=[neuron, fill=antiquebrass]
  \tikzstyle{output}=[neuron, fill=ashgrey]
  \node at (15,50) { Edge Aggregation};
  \node[input] (I1) at (0,10) {};
  \node[input] (I2) at (0,25) {};
  \node[input] (I3) at (0,40) {};
  \node[hidden] (H1) at (15,15) {};
  \node[hidden] (H2) at (15,25) {};
  \node[hidden] (H3) at (15,35) {};
  \node[output] (O1) at (30,20) {};
  \node[output] (O2) at (30,30) {};
  \foreach \i in {1,2,3} {
    \foreach \h in {1,2,3} { \draw[line width=0.3pt, gray!70] (I\i) -- (H\h); }
    \draw[line width=0.3pt, gray!70] (H\i) -- (O1);
    \draw[line width=0.3pt, gray!70] (H\i) -- (O2);
  }
\end{scope}

% ==== Mini NNs: Local Training  ====
\begin{scope}[scale=2,shift={(0.4cm,1cm)}]
  \tikzstyle{neuron}=[circle, draw=black, line width=0.4pt, minimum size=5pt, inner sep=0pt]
  \tikzstyle{input}=[neuron, fill=bazaar]
  \tikzstyle{hidden}=[neuron, fill=antiquebrass]
  \tikzstyle{output}=[neuron, fill=ashgrey]
  \node at (15,50) {\color{mcrnbrown} Local Training};
  \node[input] (I1) at (0,10) {};
  \node[input] (I2) at (0,25) {};
  \node[input] (I3) at (0,40) {};
  \node[hidden] (H1) at (15,15) {};
  \node[hidden] (H2) at (15,25) {};
  \node[hidden] (H3) at (15,35) {};
  \node[output] (O1) at (30,20) {};
  \node[output] (O2) at (30,30) {};
  \foreach \i in {1,2,3} {
    \foreach \h in {1,2,3} { \draw[line width=0.3pt, gray!70] (I\i) -- (H\h); }
    \draw[line width=0.3pt, gray!70] (H\i) -- (O1);
    \draw[line width=0.3pt, gray!70] (H\i) -- (O2);
  }
\end{scope}

% ==== Uplink (mcrnred arrows) — static ====
\draw[very thick, color=mcrnred, dashed, -latex] (200,385) -- (400,498);
\draw[very thick, color=mcrnred, dashed, -latex] (419,388) -- (430,498);
\draw[very thick, color=mcrnred, dashed, -latex] (725,388) -- (460,498);
\node[rotate=30] at (270,450) {\color{mcrnred} $\substack{\text{\normalsize {Uplink}} \\ \text{\normalsize {Transmission}}}$};

% % ==== Downlink (mcrnblue arrows) — static ====
 \draw[very thick, color=mcrnblue, -latex] (415,498) -- (215,385);
 \draw[very thick, color=mcrnblue, -latex] (438,495) -- (427,385);
 \draw[very thick, color=mcrnblue, -latex] (480,498) -- (745,388);
  \node[rotate=-20] at (625,457) {\color{mcrnblue} $\substack{\text{\normalsize {Broadcast}} \\[0.2em] \text{\normalsize {Model}}}$};

% % ==== Mini NNs: local training (top) ====
% \begin{scope}[scale=2,shift={(-1cm,3cm)}]
%   \tikzstyle{neuron}=[circle, draw=black, line width=0.4pt, minimum size=5pt, inner sep=0pt]
%   \tikzstyle{input}=[neuron, fill=bazaar!50]
%   \tikzstyle{hidden}=[neuron, fill=antiquebrass!50]
%   \tikzstyle{output}=[neuron, fill=ashgrey!50]
%   \node at (15,50) {\color{mcrnblue} Local Training};
%   \node[input] (I1) at (0,10) {};
%   \node[input] (I2) at (0,25) {};
%   \node[input] (I3) at (0,40) {};
%   \node[hidden] (H1) at (15,15) {};
%   \node[hidden] (H2) at (15,25) {};
%   \node[hidden] (H3) at (15,35) {};
%   \node[output] (O1) at (30,20) {};
%   \node[output] (O2) at (30,30) {};
%   \foreach \i in {1,2,3} {
%     \foreach \h in {1,2,3} { \draw[line width=0.3pt, gray!70] (I\i) -- (H\h); }
%     \draw[line width=0.3pt, gray!70] (H\i) -- (O1);
%     \draw[line width=0.3pt, gray!70] (H\i) -- (O2);
%   }
% \end{scope}

%%%%%%%%%%%%%%% Edge server 1
\begin{scope}[scale=1.2, shift={(1.0cm,9cm)}]
  \input{Tikz/Cell_tower}
  \node at (90,-90) {\color{mcrnred} $\substack{\text{\normalsize {Edge}} \\ \text{\normalsize {Server 1}}}$};
\end{scope}
%%%%%%%%%%%%%%% Edge server 2
\begin{scope}[scale=1.2, shift={(5.1cm,9cm)}]
  \input{Tikz/Cell_tower}
  \node at (90,-90) {\color{mcrnred} $\substack{\text{\normalsize {Edge}} \\ \text{\normalsize {Server 2}}}$};
\end{scope}
%%%%%%%%%%%%%%% Edge server Q
\begin{scope}[scale=1.2, shift={(11.1cm,9cm)}]
  \input{Tikz/Cell_tower}
  \node at (90,-90) {\color{mcrnred} $\substack{\text{\normalsize {Edge}} \\ \text{\normalsize {Server $Q$}}}$};
\end{scope}

\draw (575,350) node {\Huge $\cdots\cdots$};

% ==== Downlink (mcrnblue arrows) — static ====
\draw[very thick, color=mcrnbrown, -latex] (215,304) -- (215,210);
\draw[very thick, color=mcrnbrown, -latex] (430,304) -- (430,210);
\draw[very thick, color=mcrnbrown, -latex] (745,304) -- (745,210);
\node[rotate=0] at (265,257) {\color{mcrnbrown} $\substack{\text{\normalsize {Majority-Vote}} \\[0.2em] \text{\normalsize {Signs}}}$};
% ==== Uplink (mcrnred arrows) — static ====
\draw[very thick, dashed, -latex] (204,210) -- (204,304);
\draw[very thick, dashed, -latex] (421,210) -- (421,304);
\draw[very thick, dashed, -latex] (736,210) -- (736,304);
\node[rotate=0] at (168,257) { $\substack{\text{\normalsize {Gradient}} \\[0.2em] \text{\normalsize {Signs}}}$};

%---------- Global Aggregation formula ----------
\draw (610,560) node {\Large{$\mathbf{w}^{(t+1)} = \sum_{q=1}^Q \frac{D_q}{N}\,\mathbf{v}_{q}^{(t,T_E)}$}};
\end{tikzpicture}
}
% \end{figure}
  \vspace{-1.4em}
  \caption{Sign-based implementation for an HFL scenario; devices send gradient signs to their edge servers, the servers broadcast majority-vote results back for local training; after several rounds, the edge servers forward the model parameters to the cloud for global aggregation.}
  \label{fig:HierFedSignSGD}
\end{figure}

For our configuration, depicted in Fig.~\ref{fig:HierFedSignSGD}, we assume that the cloud server manages $Q$ edge servers, each of which is connected to its cluster of devices. For edge server $q$, we denote the device set by $\mathcal{V}^q$. Device $k$ in \(\mathcal{V}^q\) has access to the local dataset \(\mathcal{D}_{qk}\), which it uses to train its local learning model. Let us denote the loss function for a single sample as $\mathcal{L}(\mathbf{w}; \boldsymbol{\xi})$, where $\mathbf{w}$ is the model parameter vector and $\boldsymbol{\xi}$ is the sample vector containing the input and output values. Based on this, the local empirical loss function of device \(k\) associated with edge server \(q\) is defined as \begin{equation} f_{qk}(\mathbf{w}) = \frac{1}{|\mathcal{D}_{qk}|} \sum_{\boldsymbol{\xi} \in \mathcal{D}_{qk}} \mathcal{L}(\mathbf{w}; \boldsymbol{\xi}). \end{equation}
Accordingly, the global average loss function is defined as
\begin{equation} \label{eq:global_loss}
\mathcal{F}(\mathbf{w}) \triangleq \frac{1}{N}\sum_{q = 1}^Q\, \,\sum_{k  \in \mathcal{V}^q} \sum_{\boldsymbol{\xi}  \in \mathcal{D}_{qk}} \mathcal{L}(\mathbf{w}; \boldsymbol{\xi}),
\end{equation}
where $N$ is the total number of data samples in the network.
However, it is beneficial to reformulate $\mathcal{F}(\mathbf{w})$ in a way that mirrors the hierarchical format illustrated earlier. To this end, consider the edge loss functions
\begin{align} \label{eq:Fq(w)}
    \mathcal{F}_q(\mathbf{w}) = \sum_{k  \in \mathcal{V}^q} \frac{|\mathcal{D}_{qk}|}{{D}_{q}}f_{qk}(\mathbf{w}),
\end{align}
where ${D}_{q} = \sum_{k \in \mathcal{V}^q} |\mathcal{D}_{qk}|$. It is evident that \eqref{eq:global_loss} can now be equivalently recast as
\begin{align} \label{eq:simpler global}
    \mathcal{F}(\mathbf{w}) = \sum_{q=1}^Q \frac{D_q}{N}\mathcal{F}_q(\mathbf{w}),
\end{align}
which is a hierarchical representation of the global loss function, as intended.

The ultimate goal is to minimize $\mathcal{F}(\mathbf{w})$:
\begin{align}
    \mathbf{w}^\star = \operatorname*{arg\,min}_{\mathbf{w}} \, \mathcal{F}(\mathbf{w}).
\end{align}
A typical solution for finding $\mathbf{w}^\star$ is to use the SGD approach with step-size $\mu$:
\begin{align}
    &\mathbf{w} \leftarrow \mathbf{w} - \mu \nabla\mathcal{F} (\mathbf{w}) = \mathbf{w} - \mu \sum_{q=1}^Q \frac{D_q}{N}\nabla \mathcal{F}_q(\mathbf{w}) \nonumber \\
    \Leftrightarrow \,\,\, &\mathbf{w} \leftarrow \sum_{q=1}^Q \frac{D_q}{N}\left(\mathbf{w} - \mu\nabla\mathcal{F}_q(\mathbf{w})\right),
    \label{eq:Glob_Aggreg}
\end{align}
where the last representation implies that the edge servers can execute the gradient descent iteration locally and send the parameter models to the cloud server for aggregation, an approach first adopted by the original FL study \cite{mcmahan2017communication}. Furthermore, from \eqref{eq:Fq(w)} we have
\begin{align} \label{eq:edge_Aggreg}
    \nabla\mathcal{F}_q(\mathbf{w}) =  \sum_{k  \in \mathcal{V}^q} \frac{|\mathcal{D}_{qk}|}{{D}_{q}} \nabla f_{qk}(\mathbf{w}),
\end{align}
which indicates that edge servers aggregate the local gradient vectors received from their associated devices, where each device, for the sake of computational efficiency, estimates its gradient using only a small random batch of data samples.

Steps \eqref{eq:Glob_Aggreg} and \eqref{eq:edge_Aggreg} constitute the core of HFL algorithms; however, we encounter specific challenges when executing the second step. In particular, transmitting distinct gradient values from multiple devices to an edge server over a multiple-access channel places considerable strain on communication resources, such as bandwidth, thereby necessitating some form of data quantization. An extreme form of vector quantization preserves only the signs of the entries, discarding all magnitude information. This leads to the SignSGD update rule. In the following section, we introduce our proposed algorithm, which adopts this highly compressed scheme as part of the device--edge training process.

\section{The $\mathtt{HierSignSGD}$ Algorithm} \label{sec:algorithm}

In this section, we introduce the proposed HFL algorithm, \(\mathtt{HierSignSGD}\). We begin with an overview of the algorithm, followed by a detailed convergence analysis.

\subsection{Pseudocode} \label{subsec: pseudo}

The core idea behind \(\mathtt{HierSignSGD}\) is to implement a hierarchical training procedure that operates efficiently under the stringent communication constraints of the device--edge channel. The pseudocode is provided in Algorithm~\ref{alg:hier-SgnSGD}, and a stepwise summary is given below:
\begin{enumerate}
    % \item \mathbf{Global initialization:} 
    % The cloud server initializes the global model \(\mathbf{w}^{(0)}\).

    \item \textbf{Broadcast to edges:}
    At each global round \(t\), the cloud server broadcasts \(\mathbf{w}^{(t)}\) to all edge servers.

    \item \textbf{Initializing device model:}
    Each edge server broadcasts the provided $\mathbf{w}^{(t)}$ to its associated devices, which then set  
    \[
        \mathbf{v}_{q}^{(t,0)} = \mathbf{w}^{(t)}.
    \]
   
    \item \textbf{Local gradient computation at devices:}
    For each local step $\tau$, each device computes a gradient estimate
    \[
        \hat{\mathbf{g}}_{qk}^{(t,\tau)} = \hat{\nabla} f_{qk}\big(\mathbf{v}_{q}^{(t,\tau)}\big).
    \]
    Then, only the element-wise signs \(\operatorname{sgn}\!\big(\hat{\mathbf{g}}_{qk}^{(t,\tau)}\big)\) are sent to the corresponding edge server.

    \item \textbf{Vote-based aggregation at edges:}
    Each edge server aggregates the received signs via a majority-vote
    \[ 
        \mathbf{s}_{q}^{(t,\tau)} = \operatorname{sgn}\!\Big(
            \!\sum_{k \in \mathcal{V}^q} 
            \operatorname{sgn}\!\big(\hat{\mathbf{g}}_{qk}^{(t,\tau)}\big)
        \Big),
    \]
    and transmits the resulting sign vector back to the devices. Subsequently, both the edge server and the devices update their local models using a sign-based descent step:
    \[
        \mathbf{v}_{q}^{(t,\tau+1)} = 
        \mathbf{v}_{q}^{(t,\tau)} - \mu\,\mathbf{s}_{q}^{(t,\tau)}.
    \]

    \item \textbf{Return to cloud for aggregation:}
    After \(T_E\) local steps at the edge, each edge server sends its final local model 
    \(\mathbf{v}_{q}^{(t,T_E)}\) back to the cloud server for the weighted model aggregation:
    \[
        \mathbf{w}^{(t+1)} = 
        \sum_{q=1}^Q \frac{D_q}{N}\,\mathbf{v}_{q}^{(t,T_E)}.
    \]
\end{enumerate}

\begingroup

% Reduce spacing above and below displayed equations
\setlength{\abovedisplayskip}{2pt}
\setlength{\belowdisplayskip}{2pt}
\setlength{\abovedisplayshortskip}{0pt}
\setlength{\belowdisplayshortskip}{2pt}

% Reduce algorithm2e padding
\SetAlgoSkip{}
\SetAlgoInsideSkip{}
\SetAlCapSkip{2pt}

% Optional: smaller font also gives tighter line spacing
\SetAlFnt{\small}

\begin{algorithm}[t]
\caption{$\mathtt{HierSignSGD}$}
\label{alg:hier-SgnSGD}

% \KwIn{Number of global rounds $T_G$, local steps $T_E$, step-size $\mu$, relative edge dataset sizes $D_q/N$;}
% \KwOut{Final global model $\mathbf{w}^{(T_G)}$;}

Initialize global model $\mathbf{w}^{(0)}$\;
\For{$t=0,\ldots,T_G-1$}{
    Cloud broadcasts $\mathbf{w}^{(t)}$ to all edge servers\;
    
    \ForEach{edge server $q=1,\ldots,Q$ in parallel}{
        Broadcast $\mathbf{w}^{(t)}$ to all devices $k\in\mathcal{V}^q$\;
        
        Set
        \[
            \mathbf{v}_q^{(t,0)}=\mathbf{w}^{(t)};
        \]
        
        \For{$\tau=0,\ldots,T_E-1$}{
            Each device $k\in\mathcal{V}^q$ computes
            \[
                \hat{\mathbf{g}}_{qk}^{(t,\tau)}
                =
                \hat{\nabla} f_{qk}
                \big(
                    \mathbf{v}_q^{(t,\tau)}
                \big),
            \]
            and sends $\operatorname{sgn}\!\big(\hat{\mathbf{g}}_{qk}^{(t,\tau)}\big)$ to the edge server;
            
            Edge server computes the majority-vote
            \[
                \mathbf{s}_q^{(t,\tau)}
                =
                \operatorname{sgn}
                \Big(\!
                    \sum_{k\in\mathcal{V}^q}
                    \operatorname{sgn}
                    \big(
                        \hat{\mathbf{g}}_{qk}^{(t,\tau)}
                    \big)
                \Big),
            \]
            
            and sends  $\mathbf{s}_q^{(t,\tau)}$ back to the devices;
            
            Jointly update
            \[
                \mathbf{v}_q^{(t,\tau+1)}
                =
                \mathbf{v}_q^{(t,\tau)}
                -
                \mu \mathbf{s}_q^{(t,\tau)};
            \]
        }
        
        Send $\mathbf{v}_q^{(t,T_E)}$ to the cloud server\;
    }
    
    Cloud aggregates
    \[
        \mathbf{w}^{(t+1)}
        =
        \sum_{q=1}^Q
        \frac{D_q}{N}
        \mathbf{v}_q^{(t,T_E)} .
    \]
}
\end{algorithm}

\endgroup

We note that the use of sign-based updates to train edge models across multiple clusters directly influences the convergence behavior of the algorithm. A key objective of the following analysis is to characterize how this extreme form of compression, repeatedly applied over multiple device--edge communication rounds, impacts the convergence of \(\mathtt{HierSignSGD}\).

\subsection{Convergence Analysis}
We now aim to analyze the convergence of the proposed \(\mathtt{HierSignSGD}\) algorithm. The main objective is to characterize the expected asymptotic behavior of the iterates $\mathbf{w}^{(t)}$ produced by the algorithm. We do this by providing an upper bound for
\begin{align} \label{eq:sigma_converge}
     \mathbb{E}\Big\{ \frac{1}{T_G}\!\sum_{t=0}^{T_G-1} \big\| \nabla  \mathcal{F}({\mathbf{w}}^{(t)}) \big\|_1\Big\}.
\end{align}
As can be seen, we use the $\ell_1$-norm in \eqref{eq:sigma_converge}, while its dual norm, 
$\ell_\infty$, appears in intermediate steps of the analysis later. This choice is dictated by the geometry 
of sign-based updates rather than by an arbitrary norm selection. In particular, 
as we shall see, the descent term induced by a sign direction naturally 
corresponds to $\|\cdot\|_1$ of the gradient.

% Therefore, the $\ell_1$-norm 
% provides a natural stationarity measure for SignSGD-type methods.
% This choice is mainly dictated by the geometry of sign-based updates.
% Hence, we select norms solely for analytical convenience, without loss of generality in the results.
% Because we mostly deal with inequalities in our analysis, the choice of norm is inconsequential, as all norms on a finite-dimensional space are equivalent~\cite{horn2013matrix}. 

Our analysis relies on the following standard assumptions commonly adopted in the FL literature:
\smallskip
\begin{itemize}
    \item[$\blacktriangleright$]  \textbf{A1) Lower bounded objective}:
    For all $\mathbf{w}\in \mathbb{R}^d$, we have 
    \[
    \mathcal{F}(\mathbf{w}) \geq \mathcal{F}^\star,
    \]
    where \(\mathcal{F}^\star\) is a lower bound on the objective value.
    \smallskip
    \item[$\blacktriangleright$]  \textbf{A2) Smoothness}: Each loss function $\mathcal{F}_q: \mathbb{R}^d \rightarrow \mathbb{R}$ is $L$-smooth with respect to $\|\cdot\|$:
    \[
    \left\|\nabla \mathcal{F}_q(\mathbf{v})-\nabla \mathcal{F}_q(\mathbf{w})\right\|_{*}\, \le \,L\,\|\mathbf{v}-\mathbf{w}\|,
    \quad \forall\,\mathbf{w},\mathbf{v}\in \mathbb{R}^d,
    \]
    which implies
    \[
    \mathcal{F}_q(\mathbf{v})\ \le\ \mathcal{F}_q(\mathbf{w})\,+\,\left\langle \nabla \mathcal{F}_q(\mathbf{w}),\,\mathbf{v}-\mathbf{w}\right\rangle\,+\,\frac{L}{2}\,\|\mathbf{v}-\mathbf{w}\|^{2}.
    \]
    Consequently, the global loss function $\mathcal{F}$ in~\eqref{eq:simpler global} will also possess this property. 
    \smallskip
    \item[$\blacktriangleright$]  \textbf{A3) Bounded variance}: Each stochastic gradient obtained from a random sample is an unbiased estimator of the full-batch gradient, with its coordinates having bounded variance:
    \[
    \mathbb{E}\{\hat{\mathbf{g}}(\mathbf{w})\} = \mathbf{g}(\mathbf{w}),
    \quad
    \mathbb{E}\Big\{\!\big([\hat{\mathbf{g}}(\mathbf{w})]_i - [\mathbf{g}(\mathbf{w})]_i\big)^2\Big\} \le \sigma^2.
    \]
    It follows from this assumption that the mini-batch gradient estimate is also unbiased, with variance bound reduced to $\sigma^2/B$, where $B$ denotes the batch-size. By employing the identity $\mathbb{E}\{X\} \leq \sqrt{\mathbb{E}\{X^2\}}$, we deduce
    \[
    \mathbb{E}\Big\{\big|[\hat{\mathbf{g}}(\mathbf{w})]_i - [\mathbf{g}(\mathbf{w})]_i\big| \Big\} \le \frac{\sigma}{\sqrt{B}}.
    \]
    %\smallskip
    \item[$\blacktriangleright$] \textbf{A4) Inter-cluster gradient dissimilarity}: Throughout this paper, inter-cluster data heterogeneity refers to the statistical mismatch among the data distributions of device clusters associated with different edge servers. We quantify its effect through the following edge-level gradient dissimilarity measure: \begin{align*} \zeta = \sup_{\mathbf{w}\in\mathbb{R}^d} \sum_{q=1}^Q \frac{D_q}{N} \left\| \nabla \mathcal{F}_q(\mathbf{w}) - \nabla \mathcal{F}(\mathbf{w}) \right\|_1. \end{align*} The quantity \(\zeta\) measures the weighted mismatch between the edge-level gradients and the global gradient, and is commonly used as a measure of statistical heterogeneity or non-IIDness (see, e.g., \cite{yu2019linear,haddadpour2019convergence}).
\end{itemize}

%%%%%%%%%%%%%%%%%%%%%%%%%%%%%%%%%%%%%%%%%%%%%%%%%%%%%%%%%%%%%%%%%%%%
Given assumptions A1--A4, we proceed to analyze the convergence properties of the \(\mathtt{HierSignSGD}\) algorithm. We first analyze a simplified variant of the algorithm that omits the majority-vote mechanism. We then extend the analysis to demonstrate that the same error bound holds when majority voting is employed at the edge servers.

\begin{theorem} \label{thm:1st}
Consider running Algorithm~\ref{alg:hier-SgnSGD} with single-device clusters for $T_G$ global rounds and $T_E$ local steps, using step-size $\mu$ and batch-size $B$. 
Under assumptions A1--A4, the following performance bound holds:
\begin{align}\label{eq:bound}
\frac{1}{T_G}\!\!\sum_{t=0}^{T_G-1} 
\mathbb{E}\!\left\{\big\|\nabla \mathcal{F}(\mathbf{w}^{(t)})\big\|_{1}\right\}
\;\le\;
\frac{\mathcal{F}(\mathbf{w}^{(0)})-\mathcal{F}^\star}{\mu T_GT_E}
\;+\;C,
\end{align}
where
\begin{align} \label{eq:C}
 C \,=\, 2\zeta \,+\, \frac{2\sigma d}{\sqrt{B}} \,+\, \big(\frac{3T_E}{2}-1\big)L\mu.   
\end{align}

\end{theorem}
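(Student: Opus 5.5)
The plan is to run a one-step descent analysis on the global objective $\mathcal{F}$ along a \emph{virtual} averaged trajectory, and then telescope. Define for each local step the virtual global iterate
\[
\bar{\mathbf{v}}^{(t,\tau)}=\sum_{q=1}^Q \frac{D_q}{N}\mathbf{v}_q^{(t,\tau)},
\]
so that $\bar{\mathbf{v}}^{(t,0)}=\mathbf{w}^{(t)}$ and $\bar{\mathbf{v}}^{(t,T_E)}=\mathbf{w}^{(t+1)}$. Each virtual step is
\[
\bar{\mathbf{v}}^{(t,\tau+1)}=\bar{\mathbf{v}}^{(t,\tau)}-\mu\sum_q \frac{D_q}{N}\operatorname{sgn}\big(\hat{\mathbf{g}}_q^{(t,\tau)}\big),
\]
where $\hat{\mathbf{g}}_q^{(t,\tau)}$ is the single device's stochastic gradient of $\mathcal{F}_q$ at $\mathbf{v}_q^{(t,\tau)}$. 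Apply A2 with $\|\cdot\|=\|\cdot\|_\infty$ and dual $\|\cdot\|_1$. Since the step has $\ell_\infty$-norm at most $\mu$, this gives
\[
\mathcal{F}(\bar{\mathbf{v}}^{(t,\tau+1)})\le \mathcal{F}(\bar{\mathbf{v}}^{(t,\tau)})-\mu\sum_q\frac{D_q}{N}\big\langle\nabla\mathcal{F}(\bar{\mathbf{v}}^{(t,\tau)}),\operatorname{sgn}(\hat{\mathbf{g}}_q^{(t,\tau)})\big\rangle+\frac{L\mu^2}{2}.
\]

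The core step is lower-bounding each inner product by the $\ell_1$ norm minus error terms. I will use the elementary inequality
\[
\langle \mathbf{a},\operatorname{sgn}(\mathbf{b})\rangle\ge\|\mathbf{a}\|_1-2\|\mathbf{a}-\mathbf{b}\|_1 .
\]
It holds coordinatewise: a sign mismatch can occur only when $|a_i|\le|a_i-b_i|$. Take $\mathbf{a}=\nabla\mathcal{F}(\bar{\mathbf{v}}^{(t,\tau)})$ and $\mathbf{b}=\hat{\mathbf{g}}_q^{(t,\tau)}$, and split $\mathbf{a}-\mathbf{b}$ into three pieces. The first is the heterogeneity piece $\nabla\mathcal{F}-\nabla\mathcal{F}_q$ at $\bar{\mathbf{v}}$; after $D_q/N$ weighting it contributes $2\zeta$ by A4. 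The second is the drift piece $\nabla\mathcal{F}_q(\bar{\mathbf{v}}^{(t,\tau)})-\nabla\mathcal{F}_q(\mathbf{v}_q^{(t,\tau)})$, bounded by $L\|\bar{\mathbf{v}}^{(t,\tau)}-\mathbf{v}_q^{(t,\tau)}\|_\infty$. The third is the noise piece $\nabla\mathcal{F}_q(\mathbf{v}_q)-\hat{\mathbf{g}}_q$, whose expected $\ell_1$ norm is at most $d\sigma/\sqrt{B}$ by A3, contributing $2\sigma d/\sqrt{B}$.

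For the drift, note that all edge models start at $\mathbf{w}^{(t)}$ and each moves by at most $\mu$ per coordinate per step. Hence $\|\mathbf{v}_q^{(t,\tau)}-\bar{\mathbf{v}}^{(t,\tau)}\|_\infty\le 2\mu\tau$, or $\mu\tau$ via a sharper argument: each coordinate of $\mathbf{v}_q-\mathbf{w}^{(t)}$ lies in $[-\mu\tau,\mu\tau]$, and so does its weighted average. Some care is needed here, because the constant I want is $(\tfrac{3T_E}{2}-1)L\mu$. That will come from combining the $L\mu/2$ smoothness term with a drift contribution averaging $\frac{1}{T_E}\sum_{\tau}c\,L\mu\tau$. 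I will pin the constant by choosing how I bound the drift: for instance, $2L\mu\tau$ summed over $\tau=0,\dots,T_E-1$ and averaged gives $L\mu(T_E-1)$. Adding $L\mu/2$ gives $L\mu(T_E-\tfrac12)$, which is below the target, so any slightly looser route also fits inside $C$. I expect this drift bookkeeping, and matching the stated constant, to be the main (if mild) obstacle. I will likely compare against $\mathbf{w}^{(t)}$ rather than $\bar{\mathbf{v}}$ if that reproduces the exact $3T_E/2-1$.

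Finally, rearrange to get $\mu\|\nabla\mathcal{F}(\bar{\mathbf{v}}^{(t,\tau)})\|_1\le\mathcal{F}(\bar{\mathbf{v}}^{(t,\tau)})-\mathcal{F}(\bar{\mathbf{v}}^{(t,\tau+1)})+\mu(\text{errors})$. Then convert the left side at $\bar{\mathbf{v}}^{(t,\tau)}$ back to $\mathbf{w}^{(t)}$ using $\|\nabla\mathcal{F}(\mathbf{w}^{(t)})\|_1\le\|\nabla\mathcal{F}(\bar{\mathbf{v}}^{(t,\tau)})\|_1+L\mu\tau$; this is another drift term, absorbed in the $T_E$-linear part of $C$. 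Sum over $\tau$ and $t$, take expectations, telescope $\mathcal{F}(\mathbf{w}^{(0)})-\mathcal{F}(\mathbf{w}^{(T_G)})\le\mathcal{F}(\mathbf{w}^{(0)})-\mathcal{F}^\star$ using A1, and divide by $\mu T_GT_E$ to obtain \eqref{eq:bound}.
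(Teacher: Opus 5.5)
Your overall route is sound: a virtual averaged sequence with a descent step per local step, and the deterministic inequality $\langle\mathbf a,\operatorname{sgn}(\mathbf b)\rangle\ge\|\mathbf a\|_1-2\|\mathbf a-\mathbf b\|_1$ in place of the paper's Markov step. The gap is in the drift bookkeeping, which as written does not give the stated constant.

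\textbf{The drift bound is false.} You claim $\|\mathbf v_q^{(t,\tau)}-\bar{\mathbf v}^{(t,\tau)}\|_\infty\le\mu\tau$. Knowing that $[\mathbf v_q^{(t,\tau)}-\mathbf w^{(t)}]_i$ and $[\bar{\mathbf v}^{(t,\tau)}-\mathbf w^{(t)}]_i$ both lie in $[-\mu\tau,\mu\tau]$ only bounds their difference by $2\mu\tau$. The best general bound is $2(1-D_q/N)\mu\tau$. It is attained when edge $q$ moves opposite to all the others on a coordinate, and it is close to $2\mu\tau$ for a low-weight edge. This is exactly the heterogeneous regime.

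\textbf{The valid bound gives a worse constant.} With $2\mu\tau$, the drift piece costs $2\cdot L\cdot 2\mu\tau=4L\mu\tau$ per step; your figure $2L\mu\tau$ silently uses the false bound. Your final conversion $\|\nabla\mathcal F(\mathbf w^{(t)})\|_1\le\|\nabla\mathcal F(\bar{\mathbf v}^{(t,\tau)})\|_1+L\mu\tau$ costs another $L\mu\tau$. Averaging over $\tau$ and adding $L\mu/2$ gives $(\tfrac{5T_E}{2}-2)L\mu$. This exceeds $(\tfrac{3T_E}{2}-1)L\mu$ by $(T_E-1)L\mu$.

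\textbf{There is no slack.} Your tally $L\mu(T_E-\tfrac12)$ omits the conversion term. Once it is added, even the invalid $\mu\tau$ bound lands exactly on $(\tfrac{3T_E}{2}-1)L\mu$, with nothing to spare.

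\textbf{Your fallback does not fix it as described.} Suppose you route the drift through $\mathbf w^{(t)}$ but keep $\mathbf a=\nabla\mathcal F(\bar{\mathbf v}^{(t,\tau)})$ and still convert at the end. Then the shift $\nabla\mathcal F(\bar{\mathbf v}^{(t,\tau)})-\nabla\mathcal F(\mathbf w^{(t)})$ is paid twice inside $2\|\mathbf a-\mathbf b\|_1$ and once more in the conversion. The total is again $5L\mu\tau$.

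\textbf{The repair.} Pay that shift once, outside the sign inequality. Using H\"older with $\|\operatorname{sgn}(\hat{\mathbf g}_q^{(t,\tau)})\|_\infty\le 1$ and the valid bound $\|\bar{\mathbf v}^{(t,\tau)}-\mathbf w^{(t)}\|_\infty\le\mu\tau$, you get $\langle\nabla\mathcal F(\bar{\mathbf v}^{(t,\tau)}),\operatorname{sgn}(\hat{\mathbf g}_q^{(t,\tau)})\rangle\ge\langle\nabla\mathcal F(\mathbf w^{(t)}),\operatorname{sgn}(\hat{\mathbf g}_q^{(t,\tau)})\rangle-L\mu\tau$. Then apply your inequality with $\mathbf a=\nabla\mathcal F(\mathbf w^{(t)})$, heterogeneity measured at $\mathbf w^{(t)}$, and drift $\|\mathbf v_q^{(t,\tau)}-\mathbf w^{(t)}\|_\infty\le\mu\tau$, which costs $2L\mu\tau$. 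The per-step cost is $3L\mu\tau$ and no separate conversion step is needed. This gives $\tfrac{L\mu}{2}+\tfrac{1}{T_E}\sum_{\tau=0}^{T_E-1}3L\mu\tau=(\tfrac{3T_E}{2}-1)L\mu$ exactly.

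\textbf{Comparison with the paper.} Once repaired, your argument is essentially the paper's. The paper applies smoothness once between $\mathbf w^{(t)}$ and $\mathbf w^{(t+1)}$. It compares every sign with $\nabla\mathcal F(\mathbf w^{(t)})$ via the recursion $\mathcal I^{(\tau)}\le\mathcal I^{(0)}+\tau L\mu$. It pays $\tfrac{L}{2}\mu^2T_E^2$ for the quadratic term, which equals your $T_E\tfrac{L\mu^2}{2}+\sum_{\tau=0}^{T_E-1}L\mu^2\tau$. So the virtual sequence gains nothing here and adds room for exactly this kind of bookkeeping error.
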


\begin{proof}
    See Appendix~\ref{app:thm1}
\end{proof}

\begin{remark} \label{Rem:zeta_effect}
The first term on the right-hand side of~\eqref{eq:bound}
vanishes as the number of global rounds \(T_G\) increases. Moreover,
by properly tuning the step-size \(\mu\) and the batch-size \(B\), the stochastic
gradient error and the local-drift terms appearing in \(C\) can be reduced.
However, the term \(2\zeta\) cannot be controlled by such algorithmic
hyperparameters, and therefore constitutes an irreducible bias floor in the bound. This indicates that the original sign-based HFL procedure cannot directly compensate for the drift of edge models toward their local objectives. By contrast, in full-precision SGD-based FL methods, heterogeneity typically manifests through local model drift whose effect can be attenuated by decreasing the step-size \cite{karimireddy2020scaffold}.
\end{remark}
%\textcolor{}{

\section{Correcting the Edge Gradient Drift} \label{sec:DC-algorithm}

As highlighted in Remark~\ref{Rem:zeta_effect}, the inter-cluster heterogeneity term \(2\zeta\) in \eqref{eq:C} is
problematic because it does not vanish by increasing \(T_G\), decreasing
\(\mu\), or increasing \(B\). 
Furthermore, the definition of $\zeta$ shows that even IID data partitioning across edge servers does not necessarily imply $\zeta = 0$. For finite empirical datasets, the edge-level objectives can still differ from the global objective due to finite-sample effects, even when all samples are drawn from the same underlying distribution. The value of $\zeta$ vanishes only in the idealized population-risk limit, or asymptotically as the number of samples per edge becomes sufficiently large. Therefore, from the perspective of finite-sample convergence guarantees, a small heterogeneity-induced bias may persist even under IID sampling. This observation motivates the need for an algorithmic correction mechanism to ensure convergence.

In full-precision hierarchical federated optimization, local drift among edge servers can be mitigated by applying correction terms directly to the local stochastic gradient vectors~\cite{fang2024hierarchical}, yielding an update direction of the form \(\hat{\mathbf{g}}_{qk}^{(t,\tau)}+\boldsymbol{\delta}_q^{(t)}\), where \(\boldsymbol{\delta}_q^{(t)}\) is a correction vector. However, the sign-based setting considered here poses an additional challenge. During local training, devices transmit only coordinate-wise signs, so the edge server does not observe the stochastic gradients themselves. Hence, this full-vector correction cannot be applied after sign transmission, since the magnitude information in \(\hat{\mathbf{g}}_{qk}^{(t,\tau)}\) has already been discarded. Instead, the correction must be incorporated at the device before the sign operation is taken. This modifies the sign decision itself while preserving binary device--edge communication. Whether such a pre-sign correction remains effective after one-bit compression is nontrivial and constitutes a key focus of our analysis.

To this end, we introduce edge-level and global gradient anchors. For edge
server \(q\), define
\begin{align}
    \mathbf{c}_q^{(t)}
    &\triangleq
    \nabla \mathcal{F}_q(\mathbf{w}^{(t)}),
    \label{eq:edge_anchor}
    \\
    \mathbf{c}^{(t)}
    &\triangleq
    \sum_{q=1}^Q \frac{D_q}{N}\mathbf{c}_q^{(t)}
    =
    \nabla \mathcal{F}(\mathbf{w}^{(t)}).
    \label{eq:global_anchor}
\end{align}
The vector \(\mathbf{c}_q^{(t)}\) represents the gradient anchor induced by the data distribution of edge \(q\), whereas \(\mathbf{c}^{(t)}\) represents the corresponding global gradient anchor. Hence, the drift-correction vector
for edge \(q\) is given by
\begin{align}
    \boldsymbol{\delta}_q^{(t)}
    =
    \mathbf{c}^{(t)}-\mathbf{c}_q^{(t)}.
    \label{eq:correction_vector}
\end{align}
Ideally, using \(\boldsymbol{\delta}_q^{(t)}\) would align the average corrected
edge-level direction with the global direction at \(\mathbf{w}^{(t)}\). Indeed,
at the beginning of a global round, where
\(\mathbf{v}_q^{(t,0)}=\mathbf{w}^{(t)}\), we have
\begin{align*}
    \nabla \mathcal{F}_q(\mathbf{w}^{(t)})
    +
    \mathbf{c}^{(t)}
    -
    \mathbf{c}_q^{(t)}
    =
    \nabla \mathcal{F}(\mathbf{w}^{(t)}).
\end{align*}
Thus, the edge-specific gradient bias is canceled at the start of the round.

However, in practice, one should use a damped correction by applying
\begin{align}  
\hat{\mathbf{g}}_{qk}^{(t,\tau)}+\rho\boldsymbol{\delta}_q^{(t)},
\end{align} 
 where
\(\rho\in(0,1]\) is a tunable correction-strength parameter. The case
\(\rho=1\) corresponds to the full correction described  earlier, while using smaller values may improve stability when $T_E$ is large.

A direct implementation of~\eqref{eq:edge_anchor}--\eqref{eq:global_anchor}
would require the cloud to first broadcast \(\mathbf{w}^{(t)}\), receive
\(\mathbf{c}_q^{(t)}\) from all edges, compute \(\mathbf{c}^{(t)}\), and then
broadcast \(\mathbf{c}^{(t)}\) back before local training starts. To avoid this additional pretraining synchronization, we use a pipelined version of the
correction. Specifically, we initialize
\begin{align}
    \mathbf{c}^{(-1)}=\mathbf{0},
    \qquad
    \mathbf{c}_q^{(-1)}=\mathbf{0},
    \quad q=1,\ldots,Q,
\end{align}
and at global round \(t\), the devices under edge \(q\) use the one-round stale
correction $\mathbf{c}^{(t-1)}-\mathbf{c}_q^{(t-1)}$.
Accordingly, device \(k\in\mathcal{V}^q\) computes the corrected sign
\begin{align}
    \widetilde{\mathbf{s}}_{qk}^{(t,\tau)}
    =
    \operatorname{sgn}\!
    \big(
        \hat{\mathbf{g}}_{qk}^{(t,\tau)}
        +
        \rho\mathbf{c}^{(t-1)}
        -
        \rho\mathbf{c}_q^{(t-1)}
    \big),
    \label{eq:corrected_device_sign}
\end{align}
and sends only \(\widetilde{\mathbf{s}}_{qk}^{(t,\tau)}\) to the edge server.
The edge server then performs majority-vote aggregation over the corrected
signs,
\begin{align}
    \widetilde{\mathbf{s}}_q^{(t,\tau)}
                =
                \operatorname{sgn}\!
                \Big(
                    \!\sum_{k\in\mathcal{V}^q}
                    \widetilde{\mathbf{s}}_{qk}^{(t,\tau)}
                \Big),
    \label{eq:corrected_majority_vote}
\end{align}
and updates its local model according to
\begin{align*}
    \mathbf{v}_q^{(t,\tau+1)}
    =
    \mathbf{v}_q^{(t,\tau)}
    -
    \mu
    \widetilde{\mathbf{s}}_q^{(t,\tau)}.
    % \label{eq:corrected_update}
\end{align*}

During the same global round, the devices also compute anchor gradients $\nabla f_{qk}(\mathbf{w}^{(t)})$ at the
broadcast model \(\mathbf{w}^{(t)}\). The edge server aggregates these anchor
gradients as
\begin{align}
    \mathbf{c}_q^{(t)}
    =
    \sum_{k\in\mathcal{V}^q}
    \frac{|\mathcal{D}_{qk}|}{D_q}
    \nabla f_{qk}(\mathbf{w}^{(t)}).
    \label{eq:edge_anchor_update}
\end{align}
After completing the \(T_E\) local corrected sign steps, the edge server 
sends both \(\mathbf{v}_q^{(t,T_E)}\) and \(\mathbf{c}_q^{(t)}\) to the cloud for aggregation
\begin{align}
    \mathbf{w}^{(t+1)}
    =
    \sum_{q=1}^Q
    \frac{D_q}{N}
    \mathbf{v}_q^{(t,T_E)},
    \quad
    \mathbf{c}^{(t)}
    =
    \sum_{q=1}^Q
    \frac{D_q}{N}
    \mathbf{c}_q^{(t)}.
    \label{eq:pipelined_cloud_updates}
\end{align}
The pair \((\mathbf{w}^{(t+1)},\mathbf{c}^{(t)})\) is then broadcast at the
beginning of the next global round. The complete procedure is summarized in Algorithm~\ref{alg:dc-hiersignsgd}.
\begin{remark} \label{Rem:price_paid}
    The additional transmission of anchor gradients in~\eqref{eq:edge_anchor_update}
is the price paid for mitigating heterogeneity-induced edge drift. Importantly, this
cost is incurred only once per global round, while the repeated device--edge
communication during the \(T_E\) local training steps remains binary. Hence,
the proposed correction trades one low-frequency full-gradient synchronization
for a substantial reduction in the heterogeneity-induced drift that appears as
the \(2\zeta\) term in the original bound.
\end{remark}

\begingroup

% Reduce spacing above and below displayed equations
\setlength{\abovedisplayskip}{2pt}
\setlength{\belowdisplayskip}{2pt}
\setlength{\abovedisplayshortskip}{0pt}
\setlength{\belowdisplayshortskip}{2pt}

% Reduce algorithm2e padding
\SetAlgoSkip{}
\SetAlgoInsideSkip{}
\SetAlCapSkip{2pt}

% Optional: smaller font also gives tighter line spacing
\SetAlFnt{\small}

\begin{algorithm}[t]
\caption{$\mathtt{DC\text{-}HierSignSGD}$}
\label{alg:dc-hiersignsgd}

% \KwIn{Global rounds $T_G$, local steps $T_E$, step-size $\mu$, weights $D_q/N$ and $|\mathcal{D}_{qk}|/D_q$, and parameter $\rho$;}
% \KwOut{Global model $\mathbf{w}^{(T_G)}$;}

Initialize $\mathbf{w}^{(0)}$ and set $\mathbf{c}^{(-1)}=\mathbf{0}$ and $\mathbf{c}_q^{(-1)}=\mathbf{0}, \ \forall q$\;

\For{$t=0,\ldots,T_G-1$}{
    Cloud broadcasts $\big(\mathbf{w}^{(t)},\mathbf{c}^{(t-1)}\big)$ to all edge servers\;
    
    \ForEach{edge server $q=1,\ldots,Q$ \emph{in parallel}}{
        Set $\mathbf{v}_q^{(t,0)}=\mathbf{w}^{(t)}$ and
        $\boldsymbol{\delta}_q^{(t-1)}=\mathbf{c}^{(t-1)}-\mathbf{c}_q^{(t-1)}$\;
        
        Broadcast $\mathbf{w}^{(t)}$ and $\boldsymbol{\delta}_q^{(t-1)}$ to devices $k\in\mathcal{V}^q$\;
        
        Each device computes
        $\nabla f_{qk}(\mathbf{w}^{(t)})$
        and sends it to edge server\;
        
        Edge server computes and holds
        \[
            \mathbf{c}_q^{(t)}
            =
            \sum_{k\in\mathcal{V}^q}
            \frac{|\mathcal{D}_{qk}|}{D_q}\nabla f_{qk}(\mathbf{w}^{(t)});
        \]
        
        \For{$\tau=0,\ldots,T_E-1$}{
            Each device $k\in\mathcal{V}^q$ computes
            \[                \widetilde{\mathbf{s}}_{qk}^{(t,\tau)}
                =
                \operatorname{sgn}\!
                \Big(
                    \hat{\nabla} f_{qk}
                    \big(
                        \mathbf{v}_q^{(t,\tau)}
                    \big)
                    +
                    \rho\boldsymbol{\delta}_q^{(t-1)}
                \Big),
            \]
            and sends $\widetilde{\mathbf{s}}_{qk}^{(t,\tau)}$ to edge server $q$\;
            
            Edge server computes the  majority vote
            \[
                \widetilde{\mathbf{s}}_q^{(t,\tau)}
                =
                \operatorname{sgn}\!
                \Big(
                    \!\sum_{k\in\mathcal{V}^q}
                    \widetilde{\mathbf{s}}_{qk}^{(t,\tau)}
                \Big);
            \]
            
            Update
            \[
                \mathbf{v}_q^{(t,\tau+1)}
                =
                \mathbf{v}_q^{(t,\tau)}
                -
                \mu
                \widetilde{\mathbf{s}}_q^{(t,\tau)};
            \]
        }
        
        Send $\big(\mathbf{v}_q^{(t,T_E)},\mathbf{c}_q^{(t)}\big)$ to the cloud\;
    }
    
    Cloud updates
    \[
        \mathbf{w}^{(t+1)}
        =
        \sum_{q=1}^Q
        \frac{D_q}{N}
        \mathbf{v}_q^{(t,T_E)},
        \quad
        \mathbf{c}^{(t)}
        =
        \sum_{q=1}^Q
        \frac{D_q}{N}
        \mathbf{c}_q^{(t)} .
    \]
}
\end{algorithm}

\endgroup

Continuing with the single-device clusters, Algorithm~\ref{alg:dc-hiersignsgd} results in the following performance bound.
\begin{theorem} \label{thm:2nd}
With gradient correction applied at the devices during each local step, and under the assumptions of Theorem~\ref{thm:1st}, Algorithm~\ref{alg:dc-hiersignsgd} attains the following error bound:
\begin{align}\label{eq:new_bound}
\frac{1}{T_G}\!\sum_{t=0}^{T_G-1} 
\mathbb{E}\!\left\{\big\|\nabla \mathcal{F}(\mathbf{w}^{(t)})\big\|_{1}\right\}
\le
\frac{\mathcal{F}(\mathbf{w}^{(0)})-\mathcal{F}^\star}{\mu T_GT_E}
+{C}_{dc},
\end{align}
where
\begin{align}\label{eq:C_tilde}
{C}_{dc} \,=\, 2(1-\rho)\zeta+\frac{2\sigma d}{\sqrt{B}} \,+\, \Big(\frac{(3+8\rho)T_E}{2}-1\Big)L\mu.
\end{align}
\end{theorem}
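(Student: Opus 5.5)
The plan is to reuse the one-step descent argument that presumably underlies Theorem~\ref{thm:1st}, replacing the uncorrected sign by the corrected sign and tracking how the correction changes the bias term. With single-device clusters, edge $q$ updates as $\mathbf{v}_q^{(t,\tau+1)}=\mathbf{v}_q^{(t,\tau)}-\mu\widetilde{\mathbf{s}}_q^{(t,\tau)}$, with
\begin{align*}
\widetilde{\mathbf{s}}_q^{(t,\tau)}=\operatorname{sgn}\big(\hat{\mathbf{g}}_q^{(t,\tau)}+\rho\boldsymbol{\delta}_q^{(t-1)}\big).
\end{align*}
Since $\mathbf{w}^{(t+1)}$ is a convex combination of edge models, I would apply $L$-smoothness of $\mathcal{F}$ (A2) in the $\ell_\infty$ norm to the virtual averaged sequence $\bar{\mathbf{v}}^{(t,\tau)}=\sum_q \frac{D_q}{N}\mathbf{v}_q^{(t,\tau)}$. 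This gives
\begin{align*}
\mathcal{F}(\bar{\mathbf{v}}^{(t,\tau+1)})\le \mathcal{F}(\bar{\mathbf{v}}^{(t,\tau)})-\mu\sum_q\tfrac{D_q}{N}\big\langle\nabla\mathcal{F}(\bar{\mathbf{v}}^{(t,\tau)}),\widetilde{\mathbf{s}}_q^{(t,\tau)}\big\rangle+\tfrac{L\mu^2}{2}.
\end{align*}

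The main step is to lower-bound each inner product using the standard sign inequality:
\begin{align*}
\langle \mathbf{a},\operatorname{sgn}(\mathbf{b})\rangle\ge\|\mathbf{a}\|_1-2\|\mathbf{a}-\mathbf{b}\|_1.
\end{align*}
Here $\mathbf{a}=\nabla\mathcal{F}(\bar{\mathbf{v}}^{(t,\tau)})$ and $\mathbf{b}=\hat{\mathbf{g}}_q^{(t,\tau)}+\rho\boldsymbol{\delta}_q^{(t-1)}$. I would split $\mathbf{a}-\mathbf{b}$ into four pieces:
\begin{itemize}
\item the stochastic noise $\nabla\mathcal{F}_q(\mathbf{v}_q^{(t,\tau)})-\hat{\mathbf{g}}_q^{(t,\tau)}$, whose expected $\ell_1$ norm is at most $d\sigma/\sqrt{B}$ by A3, giving the $2\sigma d/\sqrt{B}$ term;
\item a smoothness mismatch $\nabla\mathcal{F}(\bar{\mathbf{v}}^{(t,\tau)})-\nabla\mathcal{F}(\mathbf{w}^{(t)})$ and $\nabla\mathcal{F}_q(\mathbf{w}^{(t)})-\nabla\mathcal{F}_q(\mathbf{v}_q^{(t,\tau)})$, each controlled by $L\tau\mu$ because every coordinate moves by exactly $\mu$ per step;
\item the residual heterogeneity $(1-\rho)\big(\nabla\mathcal{F}(\mathbf{w}^{(t)})-\nabla\mathcal{F}_q(\mathbf{w}^{(t)})\big)$, whose weighted sum over $q$ is at most $(1-\rho)\zeta$ by A4;
\item the staleness error $\rho\big(\boldsymbol{\delta}_q^{(t)}-\boldsymbol{\delta}_q^{(t-1)}\big)$.
\end{itemize}
The first three pieces reproduce the structure of $C$ in \eqref{eq:C}, with $2\zeta$ replaced by $2(1-\rho)\zeta$. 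To keep the ideal correction exact, the decomposition adds and subtracts $\rho\boldsymbol{\delta}_q^{(t)}=\rho(\nabla\mathcal{F}(\mathbf{w}^{(t)})-\nabla\mathcal{F}_q(\mathbf{w}^{(t)}))$ at the anchor point $\mathbf{w}^{(t)}$.

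The main obstacle is the staleness piece, because the correction uses anchors computed at $\mathbf{w}^{(t-1)}$ rather than $\mathbf{w}^{(t)}$. I would bound it by smoothness:
\begin{align*}
\|\boldsymbol{\delta}_q^{(t)}-\boldsymbol{\delta}_q^{(t-1)}\|\le 2L\|\mathbf{w}^{(t)}-\mathbf{w}^{(t-1)}\|_\infty\le 2LT_E\mu.
\end{align*}
This uses that each edge moves each coordinate by at most $\mu T_E$ per round, and averaging preserves that bound. Multiplied by the factor $2$ from the sign inequality and by $\rho$, this gives a contribution of order $4\rho LT_E\mu$ per step. I expect the constant bookkeeping (matching norm conventions, and how the $\tau$-dependent terms sum to $\frac{3T_E}{2}-1$) to yield the $\frac{8\rho T_E}{2}L\mu$ increment in \eqref{eq:C_tilde}. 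I would mirror the constants of the proof of Theorem~\ref{thm:1st} exactly and absorb any slack into the $8\rho$. The first round $t=0$, where $\mathbf{c}^{(-1)}=\mathbf{c}_q^{(-1)}=\mathbf{0}$, must be handled separately; I would try to absorb it into the same bound or argue its error is no worse than the uncorrected case. This is the step I am least sure closes cleanly with the stated constant.

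Finally, I would telescope over $\tau=0,\dots,T_E-1$ and $t=0,\dots,T_G-1$, relating $\|\nabla\mathcal{F}(\bar{\mathbf{v}}^{(t,\tau)})\|_1$ back to $\|\nabla\mathcal{F}(\mathbf{w}^{(t)})\|_1$ at cost $L\tau\mu$. Using A1 to bound the telescoped sum by $\mathcal{F}(\mathbf{w}^{(0)})-\mathcal{F}^\star$ and dividing by $\mu T_GT_E$ then gives \eqref{eq:new_bound}.
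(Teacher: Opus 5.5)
Your handling of the correction matches the paper's. Adding and subtracting $\rho\boldsymbol{\delta}_q^{(t)}$ at $\mathbf{w}^{(t)}$ leaves the residual $(1-\rho)\big(\nabla\mathcal{F}(\mathbf{w}^{(t)})-\nabla\mathcal{F}_q(\mathbf{w}^{(t)})\big)$ plus a staleness piece of $\ell_1$-size at most $2\rho L\|\mathbf{w}^{(t)}-\mathbf{w}^{(t-1)}\|_\infty\le 2\rho LT_E\mu$. After the factor $2$, this is exactly the $\frac{8\rho T_E}{2}L\mu$ increment. Your pointwise inequality $\langle\mathbf{a},\operatorname{sgn}(\mathbf{b})\rangle\ge\|\mathbf{a}\|_1-2\|\mathbf{a}-\mathbf{b}\|_1$ gives the same bound as the paper's indicator-plus-Markov step.

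The gap is the $\rho$-independent part of the $L\mu$ coefficient, which your per-step bookkeeping does not reproduce. You apply A2 at every local step to $\bar{\mathbf{v}}^{(t,\tau)}$ and place $\nabla\mathcal{F}(\bar{\mathbf{v}}^{(t,\tau)})-\nabla\mathcal{F}(\mathbf{w}^{(t)})$ inside $\|\mathbf{a}-\mathbf{b}\|_1$. That mismatch is therefore paid twice there, and once more when you convert $\|\nabla\mathcal{F}(\bar{\mathbf{v}}^{(t,\tau)})\|_1$ back to $\|\nabla\mathcal{F}(\mathbf{w}^{(t)})\|_1$. Together with the doubled edge drift, this costs $5L\tau\mu$ per step. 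Summing over $\tau$, adding the quadratic terms $T_EL\mu^2/2$, and dividing by $\mu T_E$ yields $\big(\frac{5T_E}{2}-2\big)L\mu$ instead of $\big(\frac{3T_E}{2}-1\big)L\mu$. The excess $(T_E-1)L\mu$ does not scale with $\rho$, so it cannot be absorbed into the $8\rho$ term. As planned, the argument proves a strictly weaker bound whenever $T_E\ge 2$.

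The paper's proof of Theorem~\ref{thm:1st}, which Theorem~\ref{thm:2nd} reuses, is not a per-step argument. It applies A2 once per global round, between $\mathbf{w}^{(t)}$ and $\mathbf{w}^{(t+1)}=\mathbf{w}^{(t)}-\mu\sum_{q}\sum_{\tau}\frac{D_q}{N}\widetilde{\mathbf{s}}_q^{(t,\tau)}$, so every inner product is taken against the fixed vector $\nabla\mathcal{F}(\mathbf{w}^{(t)})$. Only the edge drift $\|\nabla\mathcal{F}_q(\mathbf{v}_q^{(t,\tau)})-\nabla\mathcal{F}_q(\mathbf{w}^{(t)})\|_1\le\tau L\mu$ then enters; doubled, it gives $(T_E-1)L\mu$ after normalization. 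The single quadratic term $\frac{L}{2}(\mu T_E)^2$ gives $\frac{T_E}{2}L\mu$.

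If you prefer to keep the virtual sequence, split
\begin{align*}
\langle\nabla\mathcal{F}(\bar{\mathbf{v}}^{(t,\tau)}),\widetilde{\mathbf{s}}_q^{(t,\tau)}\rangle
&=\langle\nabla\mathcal{F}(\mathbf{w}^{(t)}),\widetilde{\mathbf{s}}_q^{(t,\tau)}\rangle\\
&\quad+\langle\nabla\mathcal{F}(\bar{\mathbf{v}}^{(t,\tau)})-\nabla\mathcal{F}(\mathbf{w}^{(t)}),\widetilde{\mathbf{s}}_q^{(t,\tau)}\rangle .
\end{align*}
Bound the second piece by H\"older (cost $L\tau\mu$, paid once), and apply the sign inequality with $\mathbf{a}=\nabla\mathcal{F}(\mathbf{w}^{(t)})$. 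Then no conversion step is needed, the drift cost is $3L\tau\mu$ per step, and the coefficient comes out as exactly $\frac{(3+8\rho)T_E}{2}-1$.

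Your concern about $t=0$ is justified, and the paper does not resolve it either: it substitutes $\boldsymbol{\delta}_q^{(t-1)}=\nabla\mathcal{F}(\mathbf{w}^{(t-1)})-\nabla\mathcal{F}_q(\mathbf{w}^{(t-1)})$ for every $t$. With $\mathbf{c}^{(-1)}=\mathbf{c}_q^{(-1)}=\mathbf{0}$, the first round is uncorrected. Its heterogeneity contribution is $2\zeta$ rather than $2(1-\rho)\zeta+4\rho T_EL\mu$, which is larger whenever $\zeta>2T_EL\mu$. A rigorous version therefore needs either an extra $2\rho\zeta/T_G$ on the right-hand side, or anchors computed exactly at $\mathbf{w}^{(0)}$ before the first round.
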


\begin{proof}
    See Appendix~\ref{app:thm2}.
\end{proof}

Let us now comment on the above bound.

\smallskip

\begin{itemize}

\item \textbf{Mitigation of the \(\zeta\) term.} 
The most noticeable consequence of applying gradient correction at the devices is the reduction of the gradient dissimilarity effect in the convergence bound. The correction counteracts the direct influence of inter-cluster heterogeneity, which otherwise appears as a non-vanishing bias term. When the full correction with $\rho=1$ is used, the bias term is canceled; when a damped correction with \(0<\rho<1\) is used, the heterogeneity-induced drift is only partially compensated, but the update may become more stable for larger $T_E$. Larger correction strengths can accelerate early progress by more aggressively compensating inter-cluster drift, but may induce oscillations in later rounds when gradients become small and the correction begins to dominate the sign decision. Thus, $\rho$ controls a stability--correction tradeoff that is closely coupled with the number of local steps $T_E$, as will be demonstrated in our simulations. 

\smallskip

\item \textbf{Effect of $T_E$.}
The number of local steps $T_E$ plays a central role in the behavior of the corrected sign-based method. As evident from \eqref{eq:new_bound}, increasing \(T_E\) reduces the frequency of cloud aggregation and therefore improves communication efficiency, but it also allows the edge models to drift farther from the point at which the correction term was computed. As a result, the correction may become less representative of the current local gradients when \(T_E\) is large, which can lead to  oscillatory and unstable behavior in sign-based updates. In practice, the correction strength~\(\rho\) and the number of local steps \(T_E\) should therefore be tuned jointly.

\smallskip

    \item \textbf{Selecting hyperparameters.} The convergence bound also clarifies the roles of the step-size and batch-size. The optimization term improves with a larger effective step-size, whereas the drift and smoothness-related terms grow with the step-size and the number of local edge steps. Thus, the step-size controls a fundamental tradeoff between descent speed and accumulated local drift. The batch-size affects the stochastic-gradient error; larger batches reduce the variance of the gradient estimates and improve sign reliability, but require more local computation. If constant values are used throughout training, they should therefore be chosen in a horizon-dependent manner; for example, setting $\mu = 1/\sqrt{T_G}$ and $B = T_G$ balances the optimization, drift, and stochastic-gradient error terms in the convergence bound. The following corollary highlights this.
\end{itemize}

\begin{corollary} \label{cor:2}
Choosing \(\mu = 1/\sqrt{T_G}\) and \(B = T_G\), we obtain a worst-case sublinear convergence rate of $\mathcal{O}(1/\sqrt{T_G})$ for $\mathtt{DC\text{-}HierSignSGD}$ with $\rho=1$:
\begin{align} 
\frac{1}{T_G}\!\!\sum_{t=0}^{T_G-1} 
\!\mathbb{E}\!\left\{\big\|\nabla \mathcal{F}(\mathbf{w}^{(t)})\big\|_{1}\right\}
\le\frac{1}{\sqrt{T_G}}\Big(
\frac{\mathcal{F}(\mathbf{w}^{(0)})-\mathcal{F}^\star}{ T_E }
+ \widetilde{C}_{dc}\Big),
\end{align}
where
\begin{align*}
    \widetilde{C}_{dc}\,=\,2\sigma d + \big(\frac{11T_E}{2}-1\big)L.
\end{align*}
\end{corollary}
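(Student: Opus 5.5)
The corollary follows by direct substitution into Theorem~\ref{thm:2nd}; no new analysis is needed. First set $\rho=1$ in \eqref{eq:C_tilde}. The heterogeneity term $2(1-\rho)\zeta$ then vanishes, and the drift coefficient becomes $\big(\frac{11T_E}{2}-1\big)L\mu$. This gives
\begin{align*}
C_{dc}\big|_{\rho=1} \,=\, \frac{2\sigma d}{\sqrt{B}} + \Big(\frac{11T_E}{2}-1\Big)L\mu .
\end{align*}

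Next substitute $\mu=1/\sqrt{T_G}$ and $B=T_G$ into each term of \eqref{eq:new_bound}:
\begin{itemize}
\item The optimization term $\frac{\mathcal{F}(\mathbf{w}^{(0)})-\mathcal{F}^\star}{\mu T_G T_E}$ becomes $\frac{\mathcal{F}(\mathbf{w}^{(0)})-\mathcal{F}^\star}{T_E\sqrt{T_G}}$, because $\mu T_G=\sqrt{T_G}$.
\item The stochastic-gradient term $\frac{2\sigma d}{\sqrt{B}}$ becomes $\frac{2\sigma d}{\sqrt{T_G}}$.
\item The drift term becomes $\big(\frac{11T_E}{2}-1\big)\frac{L}{\sqrt{T_G}}$.
\end{itemize}
Factoring out $1/\sqrt{T_G}$ gives exactly $\frac{1}{\sqrt{T_G}}\big(\frac{\mathcal{F}(\mathbf{w}^{(0)})-\mathcal{F}^\star}{T_E}+\widetilde{C}_{dc}\big)$ with $\widetilde{C}_{dc}=2\sigma d+\big(\frac{11T_E}{2}-1\big)L$. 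Since $T_E$, $L$, $\sigma$, $d$ and the initial gap are all independent of $T_G$, the right-hand side is $\mathcal{O}(1/\sqrt{T_G})$.

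The only point to check is that Theorem~\ref{thm:2nd} holds uniformly for every constant step-size and batch-size. Otherwise the horizon-dependent choice $\mu=1/\sqrt{T_G}$, $B=T_G$ would not be admissible. The theorem imposes no step-size restriction, so the substitution is legitimate. Because $B=T_G$ must be an integer, the mini-batch size is well defined for integer $T_G$. I expect no genuine obstacle beyond this bookkeeping.
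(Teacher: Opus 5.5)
Your proposal is correct and matches the paper, which presents the corollary as an immediate consequence of Theorem~\ref{thm:2nd}: setting $\rho=1$ removes the $2(1-\rho)\zeta$ term and turns the drift coefficient into $\big(\frac{11T_E}{2}-1\big)L\mu$, and substituting $\mu=1/\sqrt{T_G}$ and $B=T_G$ makes every term scale as $1/\sqrt{T_G}$. Your arithmetic and the resulting constant $\widetilde{C}_{dc}=2\sigma d+\big(\frac{11T_E}{2}-1\big)L$ agree exactly with the stated result.
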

From Corollary~\ref{cor:2}, it follows that as $T_G \rightarrow \infty$, we get
\begin{equation}
    \min_{\vphantom{\int_0^A}\,0 \le t \le T_G - 1}
    \mathbb{E}\!\left\{ \big\|\nabla \mathcal{F}(\mathbf{w}^{(t)})\big\|_1 \right\}
    \;\longrightarrow\; 0,
\end{equation}
indicating that the algorithm produces a sequence whose \emph{best iterate} converges on average to a stationary point.

\smallskip

% Extending the obtained results to the majority-vote setting with multi-device clusters, we have the following theorem.

We now extend the result to the majority-vote setting with multiple devices per edge cluster. Intuitively, when devices within each edge cluster have IID data distributions and their sign errors are conditionally independent, majority voting does not reduce sign reliability if individual signs are more likely to be correct than incorrect. Hence, the single-device sign-error bound also applies to the aggregated edge-level sign, leading to the following theorem.

\begin{theorem} \label{thm:3rd} Under the assumptions of Theorems~\ref{thm:1st} and~\ref{thm:2nd}, the bounds in~\eqref{eq:bound} and~\eqref{eq:new_bound} continue to hold when each edge server aggregates the signs of its associated devices through majority voting, provided that data are IID within each edge cluster.
\end{theorem}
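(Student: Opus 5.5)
The plan is to reuse the proofs of Theorems~\ref{thm:1st} and~\ref{thm:2nd} almost verbatim. In those proofs the sign operator enters in exactly one place: the per-step descent inequality. There the inner product $\langle \nabla\mathcal{F}_q(\mathbf{v}_q^{(t,\tau)}), \mathbf{s}_q^{(t,\tau)}\rangle$ (or its corrected analogue with $\widetilde{\mathbf{s}}_q^{(t,\tau)}$) is lower bounded by $\|\nabla\mathcal{F}_q\|_1$ minus a sign-error penalty. The penalty has the form
\[
2\sum_{i=1}^d |[\mathbf{g}]_i|\,\mathbb{P}\big\{[\mathbf{s}]_i \neq \operatorname{sgn}([\mathbf{g}]_i)\big\},
\]
where $\mathbf{g}$ is the relevant (possibly corrected) target direction. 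In the single-device case this is controlled coordinatewise by Markov's inequality together with A3:
\[
|[\mathbf{g}]_i|\,\mathbb{P}\{\text{wrong sign}\} \le \mathbb{E}\big|[\hat{\mathbf{g}}]_i - [\mathbf{g}]_i\big| \le \frac{\sigma}{\sqrt{B}},
\]
which produces the term $2\sigma d/\sqrt{B}$. All other terms do not involve the sign randomness beyond the fact that $\|\mathbf{s}\|_\infty\le 1$, which majority vote preserves. These are the descent term, the $\zeta$ (or $(1-\rho)\zeta$) heterogeneity term, the drift terms in $L\mu$ from $\|\mathbf{v}_q^{(t,\tau)}-\mathbf{w}^{(t)}\|_\infty \le \tau\mu$, and the correction staleness term $8\rho T_E L\mu$. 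So it suffices to show that, for each coordinate $i$, the majority-vote failure probability satisfies
\[
|[\mathbf{g}]_i|\,\mathbb{P}\{[\mathbf{s}_q]_i \ne \operatorname{sgn}([\mathbf{g}]_i)\} \le \frac{\sigma}{\sqrt{B}}.
\]

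For this, fix $t,\tau,q,i$ and condition on $\mathbf{v}_q^{(t,\tau)}$ (and on the anchors for DC). Since data are IID within cluster $q$, each device's (corrected) stochastic coordinate $X_k = [\hat{\mathbf{g}}_{qk}]_i + \rho[\boldsymbol{\delta}_q]_i$ has the same mean $m = [\nabla\mathcal{F}_q]_i + \rho[\boldsymbol{\delta}_q]_i$. The devices' mini-batches are drawn independently, so the events $E_k = \{\operatorname{sgn}(X_k)\ne\operatorname{sgn}(m)\}$ are conditionally independent with a common probability $p$. By the single-device argument, $p \le \sigma/(\sqrt{B}|m|)$.

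Split into two cases. If $p\ge 1/2$, then $|m| \le 2\sigma/\sqrt{B}$. The trivial bound $|m|\cdot 1$ on the error then gives a constant of $2\sigma/\sqrt{B}$ instead of $\sigma/\sqrt{B}$. I would handle this by slightly reorganizing the single-device lemma so that it only claims $|m|\,\mathbb{P}\{\text{error}\}\le \min(|m|, \sigma/\sqrt{B})$ in the $p<1/2$ regime, and absorbing the $p\ge1/2$ case via the original bound, noting that the original proof already used $2\sigma/\sqrt{B}$ per coordinate. If $p<1/2$, the majority vote errs only when at least half of $K$ i.i.d.\ Bernoulli$(p)$ variables equal one (ties counted as errors, which covers the case $\operatorname{sgn}(0)=0$). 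The key lemma is then that for $p<1/2$,
\[
\mathbb{P}\{\mathrm{Bin}(K,p)\ge K/2\}\le p.
\]
For odd $K$ this follows because majority-of-$K$ amplification is monotone: the map $p\mapsto \mathbb{P}\{\mathrm{Bin}(K,p)>K/2\}$ lies below the identity on $[0,1/2]$. This can be proved by induction on $K$ or by a coupling that adds two devices at a time. For even $K$, I would either break ties by a fair coin or reduce to the case $K-1$. Combining the two cases gives the required coordinatewise bound, and hence the same $2\sigma d/\sqrt{B}$ term.

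The main obstacle is the amplification lemma, specifically the tie and even-$K$ handling and the regime $p\ge 1/2$. A sign bit equal to zero would otherwise break monotonicity, so I would first try the standard convention that ties are resolved by a random sign and verify $\mathbb{P}\{\text{error}\}\le p$ directly from the binomial identity $\frac{d}{dp}\mathbb{P}\{\mathrm{Bin}(K,p)\ge m\} = K\binom{K-1}{m-1}p^{m-1}(1-p)^{K-m}$. Once the per-coordinate error bound is in hand, the rest follows by substituting it into the appendix proofs of Theorems~\ref{thm:1st} and~\ref{thm:2nd}, since every other inequality there is deterministic in the sign vector.
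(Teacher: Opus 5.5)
\paragraph{The gap: the case $p\ge 1/2$.} In this case plain Markov gives only $|m|\le 2\sigma/\sqrt{B}$. The trivial bound on the vote then yields $|m|\Pr\{\text{vote errs}\}\le 2\sigma/\sqrt{B}$ per coordinate, so your argument proves the bounds with $4\sigma d/\sqrt{B}$ in place of $2\sigma d/\sqrt{B}$. That is not what the theorem asserts.

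Your proposed absorption does not repair this, for two reasons.
\begin{itemize}
\item The single-device estimate $|m|p\le\sigma/\sqrt{B}$ says nothing about the vote once $p\ge 1/2$, because majority voting then amplifies the error toward $1$.
\item The factor $2$ in $2\sigma d/\sqrt{B}$ is not slack. It comes from $-\langle\mathbf{g},\mathbf{s}\rangle=-\|\mathbf{g}\|_1+2\sum_i|[\mathbf{g}]_i|\,\mathbb{I}\{\cdot\}$, and the per-coordinate budget in Theorems~\ref{thm:1st}--\ref{thm:2nd} is exactly $\sigma/\sqrt{B}$.
\end{itemize}

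What you need is a tail bound sharper than Markov, showing that $p\ge 1/2$ can only occur when $|m|\le\sigma/\sqrt{B}$. The paper gets this from Cantelli's inequality. It splits on the Markov bound $\psi$ rather than on $P_e$ and disposes of $\psi\ge 1$ trivially. For $\psi<1$ it uses $|\mathbb{E}\{X\}|>\nu$ to get $P_e\le\nu^2/(\nu^2+(\mathbb{E}\{X\})^2)<1/2$, and only then invokes majority vote, via MAP optimality, to conclude $P_e^{(M)}\le P_e\le\psi$.

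In your parametrization, let $\nu\le\sigma/\sqrt{B}$ be the conditional standard deviation. Cantelli gives $p\le\nu^2/(\nu^2+m^2)$, so $p\ge\frac12$ forces $|m|\le\nu\le\sigma/\sqrt{B}$, and the trivial bound $|m|\cdot 1$ already fits the budget. A first-moment alternative also works. Since $Y=X-m$ has conditional mean zero, $\mathbb{E}|Y|=2\,\mathbb{E}\{Y^{+}\}=2\,\mathbb{E}\{Y^{-}\}\ge 2|m|p$, which gives the same conclusion without variances.

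\paragraph{Where your route differs from the paper's.} Apart from this gap, your route is genuinely different and otherwise sound. The appendix measures sign errors against $\operatorname{sgn}([\nabla\mathcal{F}(\mathbf{w}^{(t)})]_i)$ and keeps $\zeta$ and the local drift inside the Markov numerator. So it is not true that every other inequality there is deterministic in the sign vector; you must re-center explicitly. Set $\mathbf{a}=\nabla\mathcal{F}(\mathbf{w}^{(t)})$, let $\mathbf{g}$ be the conditional mean of the (corrected) device gradients at $\mathbf{v}_q^{(t,\tau)}$, and let $e_i=\mathbb{I}\{[\mathbf{s}]_i=-\operatorname{sgn}([\mathbf{g}]_i)\}+\tfrac12\,\mathbb{I}\{[\mathbf{s}]_i=0\}$. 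Then
\[
-\langle\mathbf{a},\mathbf{s}\rangle\le-\|\mathbf{a}\|_1+2\|\mathbf{a}-\mathbf{g}\|_1+2\sum_{i=1}^d\big|[\mathbf{g}]_i\big|\,e_i .
\]
Here $\mathbb{E}\{\|\mathbf{a}-\mathbf{g}\|_1\}$ is exactly $\mathcal{I}^{(\tau)}$ (respectively the quantity $(\clubsuit)$), so the constants match. This buys you something real. Conditioning on $\mathbf{v}_q^{(t,\tau)}$ makes the independence of the votes a consequence of independent mini-batches rather than an assumption. The binomial amplification lemma then replaces the paper's MAP argument.

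\paragraph{The tie case.} As you state it, with ties counted as errors, the lemma is false for even $K$: for $K=2$, $\Pr\{\mathrm{Bin}(2,p)\ge 1\}=2p-p^2>p$. No change to the algorithm is needed, though. Because $\operatorname{sgn}(0)=0$, a tied vote contributes $0$ to $-[\mathbf{g}]_i[\mathbf{s}]_i$, i.e., half the cost of a wrong vote; this is the $\tfrac12$ weight in $e_i$. The resulting quantity $\Pr\{\mathrm{Bin}(K,p)>K/2\}+\tfrac12\Pr\{\mathrm{Bin}(K,p)=K/2\}$ equals the majority error over $K-1$ votes, and is therefore at most $p$ when $p<1/2$.
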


\begin{proof}
 See Appendix~\ref{App_A}.
\end{proof}
\begin{remark} \label{Rem:future_work}
In this work, the non-IID setting is designed to emphasize heterogeneity across edge servers. This choice reflects a common hierarchical edge-learning scenario in which devices associated with the same edge server are geographically or contextually close and can therefore be expected to have relatively similar data distributions, while different edge servers may serve distinct regions, user populations, or sensing environments. Accordingly, our model treats inter-cluster heterogeneity as the primary source of statistical mismatch. A more general analysis of sign-based HFL with both intra-cluster and inter-cluster heterogeneity would require an additional device-level gradient dissimilarity measure within each cluster, which we leave for future work.
\end{remark}

\section{Simulations} \label{sec:simulations}
To evaluate the performance of the proposed sign-based HFL methods, we conduct experiments on EMNIST-Digits, Fashion-MNIST, and CIFAR-10 datasets. EMNIST-Digits serves as a standard benchmark for large-scale distributed learning, while Fashion-MNIST and CIFAR-10 allow us to examine the proposed methods under progressively more challenging image-classification settings.
However, the remainder of the simulations are conducted only on the EMNIST-Digits, as its simple learning model allows us to better isolate and understand the effects of the key parameters in the sign-based update rule without confounding them with the complexity of a deeper model.
%\vspace{-0.3cm}
%\input{Figs/histo_nonIID}

\subsection{Setup}

\begin{figure*}[t]
    \centering
    \newcommand{\figspace}{\hspace{0.02\textwidth}}
    
    \subfloat[EMNIST-Digits, IID]{%
        \begin{minipage}{\SCALE\textwidth}
            \centering
            \resizebox{\linewidth}{!}{% This file was created with tikzplotlib v0.10.1.post13.
\begin{tikzpicture}[scale=0.47]

\definecolor{crimson2143940}{RGB}{214,39,40}
\definecolor{darkgray176}{RGB}{176,176,176}
\definecolor{darkorange25512714}{RGB}{255,127,14}
\definecolor{forestgreen4416044}{RGB}{44,160,44}
\definecolor{lightgray204}{RGB}{204,204,204}
\definecolor{steelblue31119180}{RGB}{31,119,180}

\begin{axis}[
legend cell align={left},
legend style={
  fill opacity=0.8,
  draw opacity=1,
  text opacity=1,
  at={(1,0.0)},
  anchor=south east,
  draw=lightgray204
},
tick align=outside,
tick pos=left,
x grid style={darkgray176},
xlabel={Global Round},
xmajorgrids,
xmin=-0.45, xmax=31.45,
xtick style={color=black},
y grid style={darkgray176},
ylabel={Test Accuracy (\%)},
ymajorgrids,
ymin=76.993875, ymax=96.708625,
ytick style={color=black},
ytick distance = 2
]
\addplot [semithick, steelblue31119180, mark=square*, mark size=1, mark options={solid}]
table {%
1 80.9625
2 85.735
3 87.865
4 89.1375
5 90.0575
6 90.6575
7 91.09
8 91.56
9 91.815
10 92.0775
11 92.2975
12 92.5225
13 92.6625
14 92.845
15 92.7625
16 93.07
17 93.18
18 93.33
19 93.4275
20 93.495
21 93.6575
22 93.715
23 93.775
24 93.8925
25 93.9425
26 94.0175
27 94.0075
28 94.16
29 94.12
30 94.2925
};
\addlegendentry{$\mathtt{HierSGD}$}
\addplot [semithick, darkorange25512714, mark=diamond*, mark size=1, mark options={solid}]
table {%
1 77.65
2 85.15
3 87.325
4 88.8875
5 89.4275
6 90.4
7 90.775
8 91.2225
9 91.6675
10 91.89
11 92.1375
12 92.3425
13 92.21
14 92.5375
15 92.54
16 92.8
17 92.9075
18 93.035
19 93.0975
20 93.2825
21 93.33
22 93.48
23 93.545
24 93.6975
25 93.7375
26 93.8
27 93.925
28 93.9975
29 93.6075
30 94.195
};
\addlegendentry{$\mathtt{Hier\text{-}Local\text{-}QSGD}$}
\addplot [semithick, forestgreen4416044, mark=*, mark size=1, mark options={solid}]
table {%
1 82.565
2 87.1075
3 88.9525
4 90.1625
5 91.34
6 91.6375
7 92.5175
8 92.8775
9 93.2625
10 93.5925
11 93.7925
12 93.9525
13 94.2575
14 94.37
15 94.4225
16 94.495
17 94.7025
18 94.83
19 94.9025
20 94.9725
21 95.055
22 95.165
23 95.2525
24 95.2675
25 95.4525
26 95.465
27 95.575
28 95.6
29 95.6325
30 95.8
};
\addlegendentry{$\mathtt{HierSignSGD}$}
\addplot [semithick, crimson2143940, mark=triangle*, mark size=1, mark options={solid}]
table {%
1 83.0875
2 86.8275
3 89.11
4 89.95
5 91.25
6 92.0075
7 92.5675
8 92.69
9 93.365
10 93.5075
11 93.7975
12 94.095
13 94.2275
14 94.3225
15 94.495
16 94.5825
17 94.71
18 94.8325
19 94.8725
20 95.0875
21 95.19
22 95.22
23 95.3025
24 95.355
25 95.505
26 95.4825
27 95.605
28 95.69
29 95.6925
30 95.8125
};
\addlegendentry{$\mathtt{DC\text{-}HierSignSGD}$}
\end{axis}

\end{tikzpicture}}%
        \end{minipage}
        \label{fig:SignvsGD_IID}
    }
    \figspace
    \subfloat[EMNIST-Digits, non-IID ($\alpha = 0.1$)]{%
        \begin{minipage}{\SCALE\textwidth}
            \centering
            \resizebox{\linewidth}{!}{% This file was created with tikzplotlib v0.10.1.post13.
\begin{tikzpicture}[scale=0.47]

\definecolor{crimson2143940}{RGB}{214,39,40}
\definecolor{darkgray176}{RGB}{176,176,176}
\definecolor{darkorange25512714}{RGB}{255,127,14}
\definecolor{forestgreen4416044}{RGB}{44,160,44}
\definecolor{lightgray204}{RGB}{204,204,204}
\definecolor{steelblue31119180}{RGB}{31,119,180}

\begin{axis}[
legend cell align={left},
legend style={
  fill opacity=0.8,
  draw opacity=1,
  text opacity=1,
  at={(1,0.0)},
  anchor=south east,
  draw=lightgray204
},
tick align=outside,
tick pos=left,
x grid style={darkgray176},
xlabel={Global Round},
xmajorgrids,
xmin=-0.45, xmax=31.45,
xtick style={color=black},
y grid style={darkgray176},
ylabel={Test Accuracy (\%)},
ymajorgrids,
ymin=13.378625, ymax=98.218875,
ytick style={color=black},
ytick distance = 8
]
\addplot [semithick, steelblue31119180, mark=square*, mark size=1, mark options={solid}]
table {%
1 28.9925
2 52.4325
3 61.285
4 67.5375
5 74.115
6 77.1175
7 80.5475
8 82.17
9 83.75
10 84.8425
11 85.765
12 86.5075
13 87.1675
14 87.68
15 87.93
16 88.335
17 88.9425
18 89.005
19 89.4125
20 89.675
21 89.885
22 90.0025
23 90.2525
24 90.35
25 90.585
26 90.715
27 90.8575
28 91.0175
29 91.0625
30 91.2525
};
\addlegendentry{$\mathtt{HierSGD}$}
\addplot [semithick, darkorange25512714, mark=diamond*, mark size=1, mark options={solid}]
table {%
1 26.125
2 46.44
3 58.37
4 68.95
5 72.0925
6 76.8125
7 80.045
8 80.58
9 81.6675
10 83.555
11 84.6775
12 84.4025
13 85.7825
14 86.6525
15 86.8575
16 87.175
17 87.475
18 87.785
19 88.295
20 88.5425
21 88.8275
22 88.9925
23 89.3575
24 89.3075
25 89.5425
26 89.74
27 90.055
28 90.0625
29 90.3525
30 90.1175
};
\addlegendentry{$\mathtt{Hier\text{-}Local\text{-}QSGD}$}
\addplot [semithick, forestgreen4416044, mark=*, mark size=1, mark options={solid}]
table {%
1 42.52
2 46.4825
3 47.8925
4 56.5725
5 60.3325
6 60.97
7 62.75
8 67.29
9 68.855
10 71.7225
11 74.03
12 76.2775
13 76.8775
14 79.21
15 79.3175
16 80.325
17 80.76
18 81.4525
19 82.5825
20 82.27
21 82.2275
22 82.79
23 83.6775
24 83.6025
25 83.38
26 83.775
27 84.725
28 85.565
29 85.935
30 86.3375
};
\addlegendentry{$\mathtt{HierSignSGD}$}
\addplot [semithick, crimson2143940, mark=triangle*, mark size=1, mark options={solid}]
table {%
1 40.5775
2 54.095
3 79.93
4 84.6475
5 87.485
6 89.105
7 89.7425
8 89.505
9 90.7875
10 91.5225
11 91.7175
12 91.6625
13 92.055
14 92.4325
15 92.7025
16 92.9275
17 92.9575
18 93.0225
19 93.1125
20 93.1975
21 93.67
22 93.7
23 93.8175
24 93.89
25 93.7875
26 93.985
27 94.2075
28 93.975
29 94.105
30 94.3625
};
\addlegendentry{$\mathtt{DC\text{-}HierSignSGD}$}
\end{axis}

\end{tikzpicture}}%
        \end{minipage}
        \label{fig:SignvsGD_nonIID}
    }
    \figspace
    \subfloat[Fashion-MNIST, IID]{%
        \begin{minipage}{\SCALE\textwidth}
            \centering
            \resizebox{\linewidth}{!}{% This file was created with tikzplotlib v0.10.1.post13.
\begin{tikzpicture}[scale=0.47]

\definecolor{crimson2143940}{RGB}{214,39,40}
\definecolor{darkgrey176}{RGB}{176,176,176}
\definecolor{darkorange25512714}{RGB}{255,127,14}
\definecolor{forestgreen4416044}{RGB}{44,160,44}
\definecolor{lightgrey204}{RGB}{204,204,204}
\definecolor{steelblue31119180}{RGB}{31,119,180}

\begin{axis}[
legend cell align={left},
legend style={
  fill opacity=0.8,
  draw opacity=1,
  text opacity=1,
  at={(1,0.0)},
  anchor=south east,
  draw=lightgrey204
},
tick align=outside,
tick pos=left,
x grid style={darkgrey176},
xlabel={Global Round},
xmajorgrids,
xmin=-0.45, xmax=31.45,
xtick style={color=black},
y grid style={darkgrey176},
ylabel={Test Accuracy (\%)},
ymajorgrids,
ymin=40.3755, ymax=86.8945,
ytick style={color=black},
ytick distance = 5
]
\addplot [semithick, steelblue31119180, mark=square*, mark size=1, mark options={solid}]
table {%
1 42.49
2 65.57
3 59.48
4 64.58
5 71.46
6 71.02
7 73.33
8 76.5
9 76.07
10 76.86
11 76.45
12 77.86
13 79.31
14 78.97
15 78.35
16 79.93
17 80.66
18 80.15
19 81.76
20 80.9
21 81.35
22 82.12
23 82.67
24 83.09
25 82.37
26 82.06
27 82.31
28 82.06
29 83.09
30 84.37
};
\addlegendentry{$\mathtt{HierSGD}$}
\addplot [semithick, darkorange25512714, mark=diamond*, mark size=1, mark options={solid}]
table {%
1 43.39
2 65.25
3 68.29
4 66.12
5 69.45
6 71.28
7 73.11
8 72.35
9 76.57
10 77.71
11 77.46
12 77.76
13 79.07
14 76.31
15 79.33
16 80.5
17 78.54
18 81.46
19 81.56
20 82.43
21 82.32
22 81.23
23 82.09
24 82.74
25 83.43
26 82.92
27 83.84
28 83.95
29 83.52
30 84.26
};
\addlegendentry{$\mathtt{Hier\text{-}Local\text{-}QSGD}$}
\addplot [semithick, forestgreen4416044, mark=*, mark size=1, mark options={solid}]
table {%
1 66.95
2 69.4
3 71.81
4 72.63
5 71.22
6 74.08
7 73.23
8 75.44
9 73.96
10 76.59
11 75.43
12 78.07
13 76.7
14 77.28
15 77.69
16 78.58
17 78.85
18 80.07
19 78.65
20 79.87
21 80.86
22 81.41
23 81.23
24 81.76
25 81.48
26 82.34
27 82.22
28 82.48
29 82.95
30 83.21
};
\addlegendentry{$\mathtt{HierSignSGD}$}
\addplot [semithick, crimson2143940, mark=triangle*, mark size=1, mark options={solid}]
table {%
1 67.13
2 66.71
3 69.18
4 71.7
5 71.81
6 73.01
7 73.28
8 75.2
9 74.6
10 76.61
11 75.12
12 77.79
13 76.7
14 78.87
15 77.11
16 79.29
17 79.49
18 79.83
19 81.24
20 80.23
21 80.73
22 81.45
23 82.21
24 82.08
25 82.38
26 82.18
27 82.6
28 82.45
29 82.61
30 83.32
};
\addlegendentry{$\mathtt{DC\text{-}HierSignSGD}$}
\end{axis}

\end{tikzpicture}}%
        \end{minipage}
        \label{fig:SignvsSGD_FMNIST_iid}
    }

    \vspace{0.2cm}

    \subfloat[Fashion-MNIST, non-IID ($\alpha = 0.1$)]{%
        \begin{minipage}{\SCALE\textwidth}
            \centering
            \resizebox{\linewidth}{!}{% This file was created with tikzplotlib v0.10.1.post13.
\begin{tikzpicture}[scale=0.47]

\definecolor{crimson2143940}{RGB}{214,39,40}
\definecolor{darkgrey176}{RGB}{176,176,176}
\definecolor{darkorange25512714}{RGB}{255,127,14}
\definecolor{forestgreen4416044}{RGB}{44,160,44}
\definecolor{lightgrey204}{RGB}{204,204,204}
\definecolor{steelblue31119180}{RGB}{31,119,180}

\begin{axis}[
legend cell align={left},
legend style={
  fill opacity=0.8,
  draw opacity=1,
  text opacity=1,
  at={(1,0.0)},
  anchor=south east,
  draw=lightgrey204
},
tick align=outside,
tick pos=left,
x grid style={darkgrey176},
xlabel={Global Round},
xmajorgrids,
xmin=-0.45, xmax=31.45,
xtick style={color=black},
y grid style={darkgrey176},
ylabel={Test Accuracy (\%)},
ymajorgrids,
ymin=12.6745, ymax=83.6355,
ytick style={color=black},
ytick distance = 8
]
\addplot [semithick, steelblue31119180, mark=square*, mark size=1, mark options={solid}]
table {%
1 19
2 43.9
3 50.52
4 56.44
5 59.66
6 62.57
7 61.83
8 68.14
9 67.08
10 71.39
11 69.66
12 73.25
13 69.49
14 72.83
15 72.48
16 74.32
17 73.47
18 75.17
19 75.52
20 75.25
21 76.2
22 75.59
23 77.1
24 76.85
25 77.7
26 76.42
27 77.97
28 78.03
29 78.68
30 79.27
};
\addlegendentry{$\mathtt{HierSGD}$}
\addplot [semithick, darkorange25512714, mark=diamond*, mark size=1, mark options={solid}]
table {%
1 22.95
2 35.82
3 51
4 55.14
5 59.28
6 59.71
7 64.35
8 69.69
9 69.3
10 70.77
11 70.49
12 73.2
13 72.75
14 69.14
15 74.26
16 71.57
17 74.37
18 75.58
19 74.76
20 73.73
21 76.05
22 76.68
23 77.97
24 77.61
25 76.96
26 77.14
27 78.36
28 76.84
29 76.85
30 78.19
};
\addlegendentry{$\mathtt{Hier\text{-}Local\text{-}QSGD}$}
\addplot [semithick, forestgreen4416044, mark=*, mark size=1, mark options={solid}]
table {%
1 25.17
2 36.18
3 43.56
4 49.75
5 52.56
6 55.73
7 54.48
8 56.11
9 53.79
10 57.79
11 57.48
12 57.02
13 59.71
14 57.13
15 60.45
16 61.1
17 58.29
18 63.69
19 64.56
20 66.1
21 64.49
22 66.02
23 66.48
24 66.43
25 64.46
26 62.09
27 64.55
28 65.65
29 64.99
30 65.24
};
\addlegendentry{$\mathtt{HierSignSGD}$}
\addplot [semithick, crimson2143940, mark=triangle*, mark size=1, mark options={solid}]
table {%
1 31.79
2 34.61
3 47.46
4 47.88
5 60.55
6 64.78
7 67.29
8 68.13
9 68.99
10 69.74
11 70.67
12 74.49
13 75.07
14 75.1
15 74.82
16 76.24
17 76.77
18 76.73
19 75.88
20 78.27
21 78.32
22 78.46
23 78.95
24 79.42
25 76.85
26 79.8
27 80.35
28 76.68
29 80.41
30 80.23
};
\addlegendentry{$\mathtt{DC\text{-}HierSignSGD}$}
\end{axis}

\end{tikzpicture}}%
        \end{minipage}
        \label{fig:SignvsSGD_FMNIST_non_iid}
    }
    \figspace
    \subfloat[CIFAR-10, IID]{%
        \begin{minipage}{\SCALE\textwidth}
            \centering
            \resizebox{\linewidth}{!}{% This file was created with tikzplotlib v0.10.1.post13.
\begin{tikzpicture}[scale=0.47]

\definecolor{crimson2143940}{RGB}{214,39,40}
\definecolor{darkgrey176}{RGB}{176,176,176}
\definecolor{darkorange25512714}{RGB}{255,127,14}
\definecolor{forestgreen4416044}{RGB}{44,160,44}
\definecolor{lightgrey204}{RGB}{204,204,204}
\definecolor{steelblue31119180}{RGB}{31,119,180}

\begin{axis}[
legend cell align={left},
legend style={
  fill opacity=0.8,
  draw opacity=1,
  text opacity=1,
  at={(1,0.0)},
  anchor=south east,
  draw=lightgrey204
},
tick align=outside,
tick pos=left,
x grid style={darkgrey176},
xlabel={Global Round},
xmajorgrids,
xmin=0, xmax=51.95,
xtick style={color=black},
y grid style={darkgrey176},
ylabel={Test Accuracy (\%)},
ymajorgrids,
ymin=9.03, ymax=70.19,
ytick style={color=black},
ytick distance = 6
]
\addplot [semithick, steelblue31119180, mark=square*, mark size=1, mark options={solid}]
table {%
1 24.35
2 31.43
3 41.55
4 44.9
5 45.39
6 47.95
7 45.77
8 44.66
9 49.47
10 50.62
11 51.58
12 52.84
13 53.7
14 53.62
15 53.64
16 54.08
17 54.87
18 55.49
19 55.29
20 57.21
21 56.28
22 55.04
23 58.95
24 58.68
25 58.44
26 60.05
27 58.97
28 58.42
29 59.29
30 60.57
31 60.35
32 61.42
33 61.05
34 61.92
35 60
36 62.08
37 62.3
38 62.51
39 62.77
40 61.18
41 62.5
42 62.86
43 62.03
44 63.64
45 64.01
46 63.52
47 64.07
48 63.13
49 62.61
50 64.28
};
\addlegendentry{$\mathtt{HierSGD}$}
\addplot [semithick, darkorange25512714, mark=diamond*, mark size=1, mark options={solid}]
table {%
1 11.81
2 12.7
3 24.36
4 36.17
5 37.64
6 39.62
7 41.55
8 42.17
9 43
10 42.86
11 44.88
12 44.6
13 44.35
14 46.41
15 47.02
16 48.13
17 48.78
18 48.79
19 49.8
20 49.46
21 48.91
22 50.33
23 50.99
24 50.94
25 49.43
26 52.52
27 52.96
28 53.16
29 52.9
30 53.03
31 54.17
32 53.08
33 54.96
34 54.07
35 54.71
36 54.22
37 54.93
38 55.79
39 56.39
40 56.1
41 54.74
42 56.25
43 56.48
44 57.1
45 56.61
46 56.88
47 57.59
48 57.93
49 57.58
50 57.89
};
\addlegendentry{$\mathtt{Hier\text{-}Local\text{-}QSGD}$}
\addplot [semithick, forestgreen4416044, mark=*, mark size=1, mark options={solid}]
table {%
1 30.44
2 40.27
3 40.22
4 45.83
5 48.68
6 50.07
7 53.84
8 56
9 56.21
10 55.17
11 57.47
12 60.13
13 58.43
14 61.77
15 61.82
16 61.56
17 62.85
18 63.07
19 63.54
20 64.42
21 63.63
22 63.77
23 62.74
24 64.85
25 65.52
26 65.51
27 64.21
28 65.11
29 66.04
30 65.29
31 63.89
32 65.21
33 65.79
34 65.91
35 65.65
36 66.21
37 66.28
38 66.58
39 65.5
40 64.69
41 66.43
42 66.24
43 65.14
44 66.07
45 66.68
46 66.4
47 65.95
48 65.17
49 66.31
50 65.87
};
\addlegendentry{$\mathtt{HierSignSGD}$}
\addplot [semithick, crimson2143940, mark=triangle*, mark size=1, mark options={solid}]
table {%
1 33.28
2 40.66
3 42.94
4 36.14
5 44.9
6 53.54
7 55.65
8 56.23
9 56.75
10 59.27
11 58.24
12 59.37
13 60.56
14 61.79
15 61.85
16 62.38
17 61.52
18 61.92
19 63.18
20 62.91
21 62.68
22 62.96
23 64.17
24 63.79
25 64.9
26 64.85
27 65.38
28 65.62
29 65.6
30 65.43
31 66.06
32 65.63
33 65.56
34 65.69
35 66
36 65.7
37 65.91
38 63.85
39 66.32
40 66.73
41 66.71
42 66.38
43 66.55
44 66.34
45 66.62
46 66.65
47 66.09
48 66.12
49 66.19
50 65.72
};
\addlegendentry{$\mathtt{DC\text{-}HierSignSGD}$}
\end{axis}

\end{tikzpicture}}%
        \end{minipage}
        \label{fig:SignvsSGD_CIFAR10_iid}
    }
    \figspace
    \subfloat[CIFAR-10, non-IID ($\alpha = 0.1$)]{%
        \begin{minipage}{\SCALE\textwidth}
            \centering
            \resizebox{\linewidth}{!}{% This file was created with tikzplotlib v0.10.1.post13.
\begin{tikzpicture}[scale=0.47]

\definecolor{crimson2143940}{RGB}{214,39,40}
\definecolor{darkgrey176}{RGB}{176,176,176}
\definecolor{darkorange25512714}{RGB}{255,127,14}
\definecolor{forestgreen4416044}{RGB}{44,160,44}
\definecolor{lightgrey204}{RGB}{204,204,204}
\definecolor{steelblue31119180}{RGB}{31,119,180}

\begin{axis}[
legend cell align={left},
legend style={
  fill opacity=0.8,
  draw opacity=1,
  text opacity=1,
  at={(1,0.0)},
  anchor=south east,
  draw=lightgrey204
},
tick align=outside,
tick pos=left,
x grid style={darkgrey176},
xlabel={Global Round},
xmajorgrids,
xmin=0, xmax=51.95,
xtick style={color=black},
y grid style={darkgrey176},
ylabel={Test Accuracy (\%)},
ymajorgrids,
ymin=7.7375, ymax=57.5125,
ytick style={color=black},
ytick distance = 6
]
\addplot [semithick, steelblue31119180, mark=square*, mark size=1, mark options={solid}]
table {%
1 12.01
2 21.69
3 24.69
4 27.33
5 27.74
6 27.12
7 30.85
8 34.51
9 32.86
10 34.96
11 36.33
12 37.05
13 38.41
14 36.7
15 39.52
16 39.01
17 40.13
18 41.17
19 41.45
20 41.34
21 42.11
22 43.39
23 43.79
24 43.36
25 40.66
26 44.96
27 45.53
28 46.09
29 43.98
30 44.31
31 45.35
32 43.96
33 45.39
34 45.56
35 47.64
36 48.81
37 46.74
38 47.22
39 47.5
40 47.1
41 48.4
42 49.03
43 47.77
44 48.11
45 45.94
46 47.73
47 46.49
48 48.2
49 49.43
50 47.71
};
\addlegendentry{$\mathtt{HierSGD}$}
\addplot [semithick, darkorange25512714, mark=diamond*, mark size=1, mark options={solid}]
table {%
1 10
2 11.07
3 17.26
4 19.86
5 26.01
6 26.37
7 30.21
8 28.45
9 32.97
10 31.45
11 30.35
12 33.65
13 34.31
14 36.97
15 36.77
16 34.89
17 37.21
18 38
19 36.75
20 36.28
21 37.88
22 33.93
23 38.37
24 37.57
25 36.82
26 37.1
27 39.08
28 41.48
29 40.2
30 38.09
31 38.8
32 40.41
33 39.68
34 39.69
35 40.93
36 41.06
37 38.41
38 39.1
39 40.58
40 41.29
41 41.73
42 41.64
43 41.56
44 40.84
45 40.13
46 41.99
47 41.18
48 40.54
49 41.98
50 39.27
};
\addlegendentry{$\mathtt{Hier\text{-}Local\text{-}QSGD}$}
\addplot [semithick, forestgreen4416044, mark=*, mark size=1, mark options={solid}]
table {%
1 18.23
2 23.37
3 20.44
4 22
5 28.79
6 34.64
7 32.23
8 31.59
9 33.48
10 38.03
11 35.86
12 37.02
13 37.59
14 37.78
15 39.97
16 41.76
17 38.84
18 41.93
19 41.71
20 43.16
21 43.57
22 44.12
23 43.39
24 43.91
25 45.63
26 45.28
27 46.93
28 45.7
29 44.27
30 43.88
31 45.01
32 47.18
33 45.55
34 44.4
35 46.41
36 45.22
37 45.26
38 45.46
39 48.15
40 47.28
41 45.5
42 47.03
43 45.9
44 47.35
45 45.68
46 46.28
47 49.63
48 46.13
49 47
50 48.85
};
\addlegendentry{$\mathtt{HierSignSGD}$}
\addplot [semithick, crimson2143940, mark=triangle*, mark size=1, mark options={solid}]
table {%
1 15.41
2 22.72
3 28.77
4 30.4
5 33.65
6 37.14
7 36.31
8 34.84
9 38.64
10 42.38
11 44.57
12 41.77
13 47.17
14 46.46
15 46.87
16 45.2
17 44.68
18 48.03
19 48.6
20 45.76
21 44.97
22 48.26
23 49.13
24 51.23
25 51.5
26 49.43
27 49.15
28 50.82
29 51.52
30 50
31 50.83
32 51.41
33 52.62
34 51.77
35 50.63
36 51.44
37 53.47
38 52.45
39 52.52
40 51.35
41 52.01
42 52.83
43 52.1
44 51.25
45 54.09
46 52.73
47 52.32
48 53.56
49 52
50 54.42
};
\addlegendentry{$\mathtt{DC\text{-}HierSignSGD}$}
\end{axis}

\end{tikzpicture}}%
        \end{minipage}
        \label{fig:SignvsSGD_CIFAR10_non_iid}
    }
   \caption{Test accuracy comparison of the proposed sign-based methods with full-precision
and quantized baselines on EMNIST-Digits, Fashion-MNIST, and CIFAR-10, using
batch size \(B=400\). For EMNIST-Digits, the step-sizes are \(\mu=1\) for the
SGD-based baselines and \(\mu=5\times 10^{-3}\) for the sign-based methods,
with correction strength \(\rho=0.2\). For Fashion-MNIST, these values are
\(\mu=0.06\), \(\mu=3\times 10^{-4}\), and \(\rho=0.07\), respectively.
For CIFAR-10, we use the decaying step-size \(\mu_t=\mu_0/\sqrt{t+1}\), with
\(\mu_0=0.08\) for the SGD-based baselines and \(\mu_0=10^{-3}\) for the
sign-based methods.}
    \label{fig:Comparison}
\end{figure*}

\begin{figure}[!t]
    \centering
    \subfloat[IID distribution]{%
        % This file was created with tikzplotlib v0.10.1.
\begin{tikzpicture}[scale=\SCALETWO]

\definecolor{crimson2143940}{RGB}{214,39,40}
\definecolor{darkgrey176}{RGB}{176,176,176}
\definecolor{darkorange25512714}{RGB}{255,127,14}
\definecolor{forestgreen4416044}{RGB}{44,160,44}
\definecolor{lightgrey204}{RGB}{204,204,204}
\definecolor{steelblue31119180}{RGB}{31,119,180}

\begin{axis}[
legend cell align={left},
legend columns=2,
legend style={
  fill opacity=0.8,
  draw opacity=1,
  text opacity=1,
  at={(1,1)},
  anchor=north east,
  draw=lightgrey204,
  /tikz/every even column/.append style={column sep=0.1cm}
},
legend style={fill opacity=0.8, draw opacity=1, text opacity=1, draw=lightgrey204},
log basis y={10},
tick align=outside,
tick pos=left,
x grid style={darkgrey176},
xlabel={Global Round},
xmajorgrids,
xmin=-1, xmax=21,
xminorgrids,
xtick style={color=black},
y grid style={darkgrey176},
ylabel={Training Loss},
ymajorgrids,
yminorgrids,
ymode=log,
log basis y={10},
ymin=0.11, ymax=3,
ytick style={color=black},
ytick={0.01,0.1,1,10,100},
yticklabels={
  \(\displaystyle 10^{-2}\),
  \(\displaystyle 10^{-1}\),
  \(\displaystyle 10^{0}\),
  \(\displaystyle 10^{1}\),
  \(\displaystyle 10^{2}\)
},
minor ytick={0.2,0.3,0.4,0.5,0.6,0.7,0.8,0.9,2,3},
yminorgrids,
]
\addplot [thick, steelblue31119180, mark=*, mark size=2, mark repeat=5, mark options={solid}]
table {%
0 2.50071094551086
1 1.44848852113088
2 1.04795084514618
3 0.824764854685465
4 0.662546654224396
5 0.557175817553202
6 0.481099123287201
7 0.416975639931361
8 0.368631569163005
9 0.335709554672241
10 0.308994230063756
11 0.291390423727035
12 0.274523572158813
13 0.258853594319026
14 0.247515315882365
15 0.237091263214747
16 0.227709850676854
17 0.221937347777685
18 0.216025619745255
19 0.210543360082308
20 0.203200821757317
};
\addlegendentry{\(\displaystyle T_E=10\)}
\addplot [thick, steelblue31119180, dashed, forget plot]
table {%
0 2.50071094551086
1 1.43726860815684
2 1.02101215101878
3 0.797312611643473
4 0.642184805806478
5 0.540303168201447
6 0.462581720336278
7 0.404645726013184
8 0.363102971140544
9 0.330351880772909
10 0.306886617294947
11 0.29092308417956
12 0.274367337846756
13 0.260464875284831
14 0.243326450856527
15 0.232690269486109
16 0.224825195423762
17 0.218352219518026
18 0.211568037565549
19 0.207070597513517
20 0.202569420083364
};
\addplot [thick, darkorange25512714, mark=*, mark size=2, mark repeat=5, mark options={solid}]
table {%
0 2.50071094551086
1 1.20085433190664
2 0.821071600341797
3 0.602931565189362
4 0.463645206928253
5 0.387356567557653
6 0.336046543550491
7 0.302307763719559
8 0.273303000068665
9 0.253423508834839
10 0.237005658658346
11 0.224656839084625
12 0.215545798865954
13 0.206466865372658
14 0.200062532973289
15 0.192809411946932
16 0.187426705447833
17 0.181909607068698
18 0.177028699239095
19 0.173564196387927
20 0.169448789286613
};
\addlegendentry{\(\displaystyle T_E=15\)}
\addplot [thick, darkorange25512714, dashed, forget plot]
table {%
0 2.50071094551086
1 1.19238414745331
2 0.802286587238312
3 0.59453381169637
4 0.474573833465576
5 0.388780689366659
6 0.338237266047796
7 0.297510080512365
8 0.275912928342819
9 0.251723586352666
10 0.240214793157578
11 0.225973152605693
12 0.216681691296895
13 0.206892894617716
14 0.199147137069702
15 0.193869067788124
16 0.186926081601779
17 0.181430969293912
18 0.177055894843737
19 0.173330140248934
20 0.167662418850263
};
\addplot [thick, forestgreen4416044, mark=*, mark size=2, mark repeat=5, mark options={solid}]
table {%
0 2.50071094551086
1 1.03772761077881
2 0.663129014841715
3 0.477524752521515
4 0.372828663778305
5 0.311740027793248
6 0.274721905438105
7 0.249341964912415
8 0.229845669031143
9 0.215855549120903
10 0.204463167262077
11 0.19561055161953
12 0.188513791910807
13 0.181281120355924
14 0.176252313748995
15 0.169570243477821
16 0.164892429892222
17 0.16041489200592
18 0.156392738246918
19 0.152408630649249
20 0.148855328623454
};
\addlegendentry{\(\displaystyle T_E=20\)}
\addplot [thick, forestgreen4416044, dashed, forget plot]
table {%
0 2.50071094551086
1 1.02797704423269
2 0.648811595948537
3 0.463425098196665
4 0.363658617448807
5 0.306266430775325
6 0.269850385665894
7 0.248798642873764
8 0.227855270147324
9 0.215036776852608
10 0.203617523201307
11 0.195154068811735
12 0.186751577941577
13 0.18026946665446
14 0.175136099425952
15 0.169798940126101
16 0.16525129776001
17 0.160539046351115
18 0.156834811902046
19 0.153167048462232
20 0.15007592300574
};
\addplot [thick, crimson2143940, mark=*, mark size=2, mark repeat=5, mark options={solid}]
table {%
0 2.50071094551086
1 0.916122624174754
2 0.55764197251002
3 0.390050844049454
4 0.311275624418259
5 0.266749067322413
6 0.238290225601196
7 0.219113608980179
8 0.205725696873665
9 0.193647624937693
10 0.185624110221863
11 0.176083390037219
12 0.170401674199104
13 0.163797483007113
14 0.159052474157015
15 0.153761694041888
16 0.149155352338155
17 0.145788149762154
18 0.142733155091604
19 0.138711228617032
20 0.136460988156001
};
\addlegendentry{\(\displaystyle T_E=25\)}
\addplot [thick, crimson2143940, dashed, forget plot]
table {%
0 2.50071094551086
1 0.907956110604604
2 0.544763383483887
3 0.383844449345271
4 0.308107870149612
5 0.263914059527715
6 0.236853350178401
7 0.218181143379211
8 0.204942380897204
9 0.193180599967639
10 0.184083939361572
11 0.175993859998385
12 0.169092675081889
13 0.162680897116661
14 0.159342199921608
15 0.153393354860942
16 0.149459905838966
17 0.144762698372205
18 0.140400572768847
19 0.137656455294291
20 0.134975646813711
};

\end{axis}

\end{tikzpicture}%
        \label{fig:T_edge_IID}
    }
    \vspace{0.15cm}
    \subfloat[Non-IID distribution with $\alpha = 0.1$]{%
        % This file was created with tikzplotlib v0.10.1.
\begin{tikzpicture}[scale=\SCALETWO]

\definecolor{crimson2143940}{RGB}{214,39,40}
\definecolor{darkgrey176}{RGB}{176,176,176}
\definecolor{darkorange25512714}{RGB}{255,127,14}
\definecolor{forestgreen4416044}{RGB}{44,160,44}
\definecolor{lightgrey204}{RGB}{204,204,204}
\definecolor{steelblue31119180}{RGB}{31,119,180}

\begin{axis}[
legend cell align={left},
legend columns=2,
legend style={
  fill opacity=0.8,
  draw opacity=1,
  text opacity=1,
  at={(1,1)},
  anchor=north east,
  draw=lightgrey204,
  /tikz/every even column/.append style={column sep=0.1cm}
},
legend style={fill opacity=0.8, draw opacity=1, text opacity=1, draw=lightgrey204},
log basis y={10},
tick align=outside,
tick pos=left,
x grid style={darkgrey176},
xlabel={Global Round},
xmajorgrids,
xmin=-1, xmax=21,
xminorgrids,
xtick style={color=black},
y grid style={darkgrey176},
ylabel={Training Loss},
ymajorgrids,
yminorgrids,
ymode=log,
ymin=0.19259874520969, ymax=3,
ytick style={color=black},
% Major ticks: only powers of 10
ytick={0.01,0.1,1,10,100},
yticklabels={
  \(\displaystyle {10^{-2}}\),
  \(\displaystyle {10^{-1}}\),
  \(\displaystyle {10^{0}}\),
  \(\displaystyle {10^{1}}\),
  \(\displaystyle {10^{2}}\)
},
minor ytick={0.2,0.3,0.4,0.5,0.6,0.7,0.8,0.9,2},
]
\addplot [thick, steelblue31119180, mark=*, mark size=2, mark repeat=5, mark options={solid}]
table {%
0 2.50071094551086
1 1.97353654219309
2 1.47725534267426
3 1.21930975335439
4 1.03687287082672
5 0.887117255624135
6 0.77679499130249
7 0.687253788979848
8 0.609643005847931
9 0.549210343805949
10 0.505574044768016
11 0.464431065734227
12 0.426220889918009
13 0.400962379391988
14 0.379210863447189
15 0.360381496413549
16 0.340463346115748
17 0.329545010519028
18 0.321698059956233
19 0.30067194237709
20 0.293904409837723
};
\addlegendentry{\(\displaystyle T_E=10\)}
\addplot [thick, steelblue31119180, dashed, forget plot]
table {%
0 2.50071094551086
1 1.96076586208343
2 1.76497220719655
3 1.62888672281901
4 1.51890773054759
5 1.38040919488271
6 1.28863864955902
7 1.22018849493663
8 1.17261806329091
9 1.12489710795085
10 1.08600631561279
11 1.05220100536346
12 0.999713766733805
13 0.973938679250081
14 0.955569016774495
15 0.92731325229009
16 0.904923027801514
17 0.87586062186559
18 0.852749798488617
19 0.815884001191457
20 0.803934013621012
};
\addplot [thick, darkorange25512714, mark=*, mark size=2, mark repeat=5, mark options={solid}]
table {%
0 2.50071094551086
1 1.91434154911041
2 1.37773001397451
3 1.02198114287059
4 0.811338956801097
5 0.658548867289225
6 0.56481009982427
7 0.475954374663035
8 0.429761964813868
9 0.381465673589706
10 0.355808446772893
11 0.330225502411524
12 0.310494446897507
13 0.294082592519124
14 0.284292123842239
15 0.271847816801071
16 0.271371105702718
17 0.254041790533066
18 0.250976918570201
19 0.244269498109818
20 0.234844075123469
};
\addlegendentry{\(\displaystyle T_E=15\)}
\addplot [thick, darkorange25512714, dashed, forget plot]
table {%
0 2.50071094551086
1 1.90840340938568
2 1.70346474959056
3 1.47499167283376
4 1.30572285060883
5 1.20160420341492
6 1.11749403998057
7 1.05152265008291
8 0.971943464374542
9 0.917598934173584
10 0.86805177230835
11 0.8299748655955
12 0.787405504449209
13 0.746551453049978
14 0.727228523763021
15 0.69571717710495
16 0.669242267131805
17 0.649981382401784
18 0.625309562651316
19 0.606295501168569
20 0.569917261091868
};
\addplot [thick, forestgreen4416044, mark=*, mark size=2, mark repeat=5, mark options={solid}]
table {%
0 2.50071094551086
1 1.91131441268921
2 1.26771877943675
3 0.917129479249318
4 0.693105842240651
5 0.561134817790985
6 0.462571389357249
7 0.401033201710383
8 0.361819298410416
9 0.322998783683777
10 0.306965618817012
11 0.283792404635747
12 0.273690796677272
13 0.2585048336188
14 0.245699498430888
15 0.24882724386851
16 0.239048372920354
17 0.242059449275335
18 0.219047735611598
19 0.239004272508621
20 0.217607058858871
};
\addlegendentry{\(\displaystyle T_E=20\)}
\addplot [thick, forestgreen4416044, dashed, forget plot]
table {%
0 2.50071094551086
1 1.88575260384878
2 1.64829942213694
3 1.33438170223236
4 1.21870577818553
5 1.11714293651581
6 1.0118430683136
7 0.932441371409098
8 0.869862279891968
9 0.800256318187714
10 0.762470199362437
11 0.699392076079051
12 0.668454707368215
13 0.64666717411677
14 0.602075250720978
15 0.577077322610219
16 0.576678049755096
17 0.545329683176676
18 0.547523098214467
19 0.511471947892507
20 0.50643132870992
};
\addplot [thick, crimson2143940, mark=*, mark size=2, mark repeat=5, mark options={solid}]
table {%
0 2.50071094551086
1 1.88223551019033
2 1.18264012031555
3 0.846209392642975
4 0.61684370953242
5 0.483113413747152
6 0.422919181934992
7 0.339329007991155
8 0.353542425409953
9 0.295665425157547
10 0.288506144984563
11 0.278966292699178
12 0.264830457639694
13 0.34622461810112
14 0.240052825180689
15 0.350074584007263
16 0.278315735642115
17 0.758439075374603
18 0.263582939386368
19 0.488010476589203
20 0.302235609817505
};
\addlegendentry{\(\displaystyle T_E=25\)}
\addplot [thick, crimson2143940, dashed, forget plot]
table {%
0 2.50071094551086
1 1.87857382818858
2 1.57161496842702
3 1.29155808550517
4 1.16578041413625
5 1.04412229162852
6 0.952052547868093
7 0.862354630533854
8 0.821755353577932
9 0.762355153369904
10 0.714355618953705
11 0.660401661554972
12 0.650832810688019
13 0.637981199741364
14 0.588640915648142
15 0.593285756142934
16 0.556483438014984
17 0.559234493128459
18 0.546254331302643
19 0.512554354317983
20 0.518445032278697
};
\end{axis}

\end{tikzpicture}%
        \label{fig:T_edge_nonIID}
    }
    \caption{Effect of $T_E$ on global training loss; comparing $\mathtt{DC\text{-}HierSignSGD}$ with $\rho=0.2$ (solid line) with $\mathtt{HierSignSGD}$ (dashed line).}
    \label{fig:T_edge}
\end{figure}

For EMNIST-Digits, we use a fully connected neural network with one hidden layer. For Fashion-MNIST, we use a convolutional neural network, while for CIFAR-10 we use ResNet-20 trained with a decaying step-size.

Simulations in our two-tier HFL setup use $Q=4$ edge servers and $|\mathcal{V}^q|=5$ devices per edge, totaling 20 participating devices.
Both IID and non-IID scenarios are considered. In the non-IID setting, statistical heterogeneity across edge servers is induced through a symmetric Dirichlet distribution, while keeping the devices within each edge IID to isolate the effect of edge-level skew. For each class $m$, a probability vector is sampled according to 
\[
\mathbf{p}_m \sim \text{Dirichlet}(\alpha \mathbf{1}_Q),
\] 
where $\alpha$ controls the concentration parameter. 
Each entry $[\mathbf{p}_m]_q$ represents the fraction of class-$m$ samples assigned to edge server $q$. Smaller $\alpha$ values yield imbalanced (non-IID) edge-level label distributions, while larger $\alpha$ values produce more uniform (IID-like) partitions. We set \(\alpha=0.1\), corresponding to a highly skewed inter-cluster data distribution and representing an extreme non-IID setting in our simulations. 
% Fig.~\ref{fig:histo} shows the distribution of digit labels across edge servers, clearly illustrating that the data partitioning is fairly heterogeneous.

%\vspace{-0.3cm}

\subsection{Learning Accuracy}

As a first scenario, we compare the proposed sign-based methods with relevant full-precision and quantized HFL baselines. The main full-precision baseline is an SGD-based hierarchical method, denoted by $\mathtt{HierSGD}$, in which devices transmit their full stochastic gradients to the edge server. The edge server then computes a weighted average of the received gradients and performs a standard gradient descent update. This method serves as the full-precision counterpart of the proposed sign-based algorithms. We also compare against the quantized HFL method proposed in~\cite{liu2022hierarchical}, where quantization is applied to the model parameters at two hierarchical layers rather than to the stochastic gradients. For a fair comparison, we assume ideal cloud--edge communication and apply an unbiased stochastic ternary quantizer to the device--edge model differences:
\begin{equation*} [\mathcal{Q}(\boldsymbol{\Delta})]_i = \begin{cases} \|\boldsymbol{\Delta}\|_2 \operatorname{sign}(\Delta_i), & \text{with probability } \dfrac{|\Delta_i|}{\|\boldsymbol{\Delta}\|_2}, \\[6pt] 0, & \text{otherwise}, \end{cases} \end{equation*} with \(\mathcal{Q}(\mathbf{0})=\mathbf{0}\). Compared with $\mathtt{HierSignSGD}$, this ternary quantizer has a higher communication cost. In particular, each device must transmit not only the signs of the selected coordinates, but also the support pattern indicating which coordinates are nonzero, together with the scaling factor \(\|\boldsymbol{\Delta}\|_2\).
To make the communication savings explicit, Table~\ref{tab:comm_cost} summarizes the device--edge uplink cost per global round for the considered methods. The table counts only device transmissions, since the device--edge uplink is the primary communication bottleneck in the considered hierarchical architecture. Full-precision quantities are represented using 32 bits per coordinate.
The hyperparameters of all methods are tuned empirically, guided by commonly used choices in the literature. In this experiment, we set \(T_E=15\).

\begin{table}[t] \centering 
\caption{Device--edge uplink cost per global round.} \label{tab:comm_cost} 
\scriptsize
\setlength{\tabcolsep}{3pt} 
\begin{tabular}{c|c|c|c|c} \hline Method & $\mathtt{HierSGD}$ & $\mathtt{Hier\text{-}Local\text{-}QSGD}$ & $\mathtt{HierSignSGD}$ & $\mathtt{DC\text{-}HierSignSGD}$ \\ \hline Bits/device & \(32T_Ed\) & \(> T_E(d+32)\) & \(T_Ed\) & \(T_Ed+32d\) \\ \hline \end{tabular} \end{table}

The results, depicted in Fig.~\ref{fig:Comparison}, highlight the benefit of the proposed correction mechanism. In the non-IID case, \(\mathtt{HierSignSGD}\) suffers from inter-cluster heterogeneity, which slows down convergence and degrades test accuracy. In contrast, \(\mathtt{DC\text{-}HierSignSGD}\) substantially mitigates this effect and achieves a more stable and accurate performance. This confirms that the proposed correction is effective when edge-level gradient dissimilarity is significant. In the IID case, the gap between the corrected and uncorrected sign-based methods is smaller, since the edge-level objectives are already well aligned with the global objective. For the EMNIST and CIFAR-10 datasets, the sign-based methods can outperform the full-precision SGD-based baseline, showing that binary device--edge communication does not necessarily compromise learning performance. Overall, \(\mathtt{DC\text{-}HierSignSGD}\) achieves either the best accuracy or performance comparable to the strongest baseline across the considered settings, while preserving the communication efficiency of sign-based training.

% This file was created with tikzplotlib v0.10.1.post13.
\begin{figure}[t] 
  \centering
\begin{tikzpicture}[scale=\SCALETWO]

\definecolor{crimson2143940}{RGB}{214,39,40}
\definecolor{darkgrey176}{RGB}{176,176,176}
\definecolor{darkorange25512714}{RGB}{255,127,14}
\definecolor{forestgreen4416044}{RGB}{44,160,44}
\definecolor{lightgrey204}{RGB}{204,204,204}
\definecolor{mediumpurple148103189}{RGB}{148,103,189}
\definecolor{steelblue31119180}{RGB}{31,119,180}

\begin{axis}[
legend cell align={left},
legend style={fill opacity=0.8, draw opacity=1, text opacity=1, draw=lightgrey204,at={(1,1)}},
log basis y={10},
tick align=outside,
tick pos=left,
x grid style={darkgrey176},
xlabel={Global Round},
xmajorgrids,
xmin=-1, xmax=21,
xminorgrids,
xtick style={color=black},
y grid style={darkgrey176},
ylabel={Training Loss},
ymajorgrids,
yminorgrids,
ymode=log,
ymin=0.232531080991499, ymax=2.80018335969894,
ytick style={color=black},
% Major ticks: only powers of 10
ytick={0.01,0.1,1,10,100},
yticklabels={
  \(\displaystyle {10^{-2}}\),
  \(\displaystyle {10^{-1}}\),
  \(\displaystyle {10^{0}}\),
  \(\displaystyle {10^{1}}\),
  \(\displaystyle {10^{2}}\)
},
% Minor ticks: shown but not labeled
minor ytick={0.3,0.4,0.5,0.6,0.7,0.8,0.9,2},
]
\addplot [thick, steelblue31119180, mark=*, mark size=2, mark repeat=5, mark options={solid}]
table {%
0 2.50071094551086
1 1.91434154911041
2 1.70673949883779
3 1.50857835127513
4 1.29597636102041
5 1.20524605960846
6 1.11946763515472
7 1.04140452721914
8 0.984344189453125
9 0.937158675924937
10 0.867610357570648
11 0.842168060557048
12 0.773834433778127
13 0.753552561950684
14 0.722246768697103
15 0.680107767995199
16 0.668144770145416
17 0.647879808203379
18 0.623026099681854
19 0.598549170017242
20 0.587118850930532
};
\addlegendentry{\(\displaystyle \rho=0\)}
\addplot [thick, darkorange25512714, mark=*, mark size=2, mark repeat=5, mark options={solid}]
table {%
0 2.50071094551086
1 1.91434154911041
2 1.4158823855718
3 1.16549396533966
4 0.975828320566813
5 0.845793090852102
6 0.745376352183024
7 0.651603367964427
8 0.594536990737915
9 0.548463694922129
10 0.494781586710612
11 0.467919578345617
12 0.431985944811503
13 0.412890583848953
14 0.393168220504125
15 0.367498648134867
16 0.35440019253095
17 0.338674572944641
18 0.333034664169947
19 0.321360284662247
20 0.297635556427638
};
\addlegendentry{\(\displaystyle \rho=0.05\)}
\addplot [thick, forestgreen4416044, mark=*, mark size=2, mark repeat=5, mark options={solid}]
table {%
0 2.50071094551086
1 1.91434154911041
2 1.34449975840251
3 1.07526753749847
4 0.884791590722402
5 0.742226424439748
6 0.64565551144282
7 0.556907317320506
8 0.497946707248688
9 0.457761702076594
10 0.411529930448532
11 0.391393985493978
12 0.358530989694595
13 0.343939359680812
14 0.327096394729614
15 0.327505658086141
16 0.298429971583684
17 0.295306274223328
18 0.283938252941767
19 0.276591233332952
20 0.260377819665273
};
\addlegendentry{\(\displaystyle \rho=0.10\)}
\addplot [thick, crimson2143940, mark=*, mark size=2, mark repeat=5, mark options={solid}]
table {%
0 2.50071094551086
1 1.91434154911041
2 1.32361266021729
3 1.01757487589518
4 0.815211142190297
5 0.669184738381704
6 0.571705088329315
7 0.472345964018504
8 0.438950133593877
9 0.359888503026962
10 0.355700594536463
11 0.313148961257935
12 0.321032923491796
13 0.291461223999659
14 0.281061869223913
15 0.260956506999334
16 0.306584726842244
17 0.260581944910685
18 0.31470831211408
19 0.269536208804448
20 0.296988886435827
};
\addlegendentry{\(\displaystyle \rho=0.40\)}
\addplot [thick, mediumpurple148103189, mark=*, mark size=2, mark repeat=5, mark options={solid}]
table {%
0 2.50071094551086
1 1.91434154911041
2 1.43225493164063
3 1.19189284534454
4 0.983708661651611
5 0.868721098804474
6 0.745235546080271
7 0.687159583822886
8 0.550125505034129
9 0.517004580974579
10 0.480501608498891
11 0.457239681609472
12 0.442957230488459
13 0.406545532226562
14 0.382886571741104
15 0.380851326004664
16 0.341205800867081
17 0.332695929924647
18 0.334154336913427
19 0.304459133402507
20 0.296046527814865
};
\addlegendentry{\(\displaystyle \rho=0.70\)}
\end{axis}

\end{tikzpicture}

\caption{Training sensitivity to different values of $\rho$ when $T_E=15$.}
\label{fig:rho_nonIID}
\end{figure}

One possible explanation for the strong IID performance of the sign-based methods on EMNIST-Digits is the noise structure encountered when training neural networks on digit-recognition datasets. Prior works have shown that stochastic-gradient noise in neural networks trained on MNIST-type datasets can exhibit heavy-tailed behavior~\cite{gurbuzbalaban2021heavy,csimcsekli2019heavy}. In such regimes, gradient magnitudes may fluctuate strongly, while coordinate-wise directional information remains comparatively more reliable. Since sign-based methods discard magnitude information and use only the gradient direction, they can be less sensitive to magnitude-induced fluctuations than standard SGD. This provides an intuitive explanation for why \(\mathtt{HierSignSGD}\) can outperform the full-precision \(\mathtt{HierSGD}\) baseline in the IID EMNIST setting. However, under non-IID edge-level partitions, local gradients may become systematically biased toward different objectives, and the robustness of the sign operator to magnitude noise alone is no longer sufficient. This is precisely where the proposed drift correction becomes beneficial.
% As a result, empirical evaluation becomes essential to determine which method performs better under non-IID conditions.

\subsection{Effect of $T_E$}

The impact of the number of local steps $T_E$ depends strongly on the degree of data heterogeneity. In the IID case, the difference between \(\mathtt{DC\text{-}HierSignSGD}\) and \(\mathtt{HierSignSGD}\) is relatively small, as shown in Fig.~\ref{fig:T_edge_IID}. This is expected, since the edge-level objectives are nearly aligned with the global objective. We observe that the dominant effect of increasing \(T_E\) is that more local sign updates are performed between cloud aggregations, leading to faster loss reduction.

The behavior is markedly different in the non-IID case. Here, the dashed curves corresponding to the uncorrected method exhibit slower loss reduction, while the solid curves show that the proposed correction substantially improves convergence. This confirms that the correction is most effective when inter-cluster gradient dissimilarity is significant. However, the dependence on \(T_E\) is not monotonic. Larger values of \(T_E\) can accelerate early progress by allowing more corrected local updates per global round, but they also increase the mismatch between the current local model and the point at which the correction was computed. This explains the oscillatory behavior observed for larger \(T_E\), especially in later global rounds. As discussed in the previous section, the parameters \(\rho\) and \(T_E\) are coupled and should be tuned jointly. We therefore next study the sensitivity of \(\mathtt{DC\text{-}HierSignSGD}\) to different values of \(\rho\).

\subsection{Sensitivity to $\rho$}

Fig.~\ref{fig:rho_nonIID} illustrates the sensitivity of \(\mathtt{DC\text{-}HierSignSGD}\) in the non-IID case to the correction strength \(\rho\) when \(T_E=15\). The case \(\rho=0\) corresponds to the uncorrected sign-based update and exhibits slower loss reduction than all nonzero values of \(\rho\), confirming the benefit of the proposed correction under inter-cluster heterogeneity. However, the effect of \(\rho\) is not monotonic. In this setting, moderate values, such as \(\rho =0.1\), provide the most stable decrease in training loss, whereas overly large correction strengths can introduce oscillations, especially in later global rounds, and may even degrade overall performance. This behavior is expected because, as training progresses and gradients become smaller, an aggressive correction term may dominate the stochastic gradient before the sign operation, leading to abrupt coordinate flips. Therefore, \(\rho\) controls a stability--correction tradeoff and must be tuned carefully in practice.

\section{Conclusion} \label{sec:conclusion}

In this paper, we developed a sign-based HFL framework that achieves stringent device--edge communication efficiency while addressing inter-cluster heterogeneity. Our analysis showed that, unlike HFL schemes based on full-precision SGD updates or conventional quantization, the direction-only nature of sign updates creates a persistent heterogeneity-induced drift term in the convergence bound. Since the sign operator discards magnitude information and interacts nonlinearly with edge-level objectives, this drift cannot be eliminated through parameter tuning alone.
To mitigate this issue, we proposed a drift-corrected sign-based algorithm in which devices apply a cloud-assisted gradient correction before transmitting binary signs to the edge server. The proposed correction reduces heterogeneity-induced drift while preserving binary device--edge communication during local training. We also extended the convergence result to the majority-vote setting. Numerical experiments confirmed that the corrected method improves the stability and accuracy of sign-based HFL under strong inter-cluster heterogeneity and can achieve performance comparable to full-precision hierarchical SGD with substantially lower device--edge communication.

\appendices

\section{Proof of Theorem~\ref{thm:1st}} \label{app:thm1}

Since there is only one device per server, we can drop the index $k$ for local variables. We start by expressing the $(t+1)$th global average as
\begin{align} \label{eq: w_update}
    \mathbf{w}^{(t+1)} &= \sum_{q=1}^Q \frac{D_q}{N}\mathbf{v}_{q}^{(t,T_E)}\nonumber \\ &= \sum_{q=1}^Q \frac{D_q}{N} \Big(\mathbf{v}_{q}^{(t,0)} -  \mu\sum_{\tau=0}^{T_E-1}  \operatorname{sgn}\!\big(\hat{\mathbf{g}}_{q}^{(t,\tau)}\big)\Big) \nonumber\\
    &= \sum_{q=1}^Q \frac{D_q}{N} \Big(\mathbf{w}^{(t)} -  \mu\sum_{\tau=0}^{T_E-1}  \operatorname{sgn}\!\big(\hat{\mathbf{g}}_{q}^{(t,\tau)}\big)\Big) \nonumber\\
    &= \mathbf{w}^{(t)} - \mu\sum_{q=1}^Q \sum_{\tau=0}^{T_E-1}  \frac{D_q}{N}\operatorname{sgn}\!\big(\hat{\mathbf{g}}_{q}^{(t,\tau)}\big).
\end{align}
Employing assumption A2 with $\ell_\infty$ norm, we write
\begin{align}
    \mathcal{F}(\mathbf{w}^{(t+1)}) &- \mathcal{F}(\mathbf{w}^{(t)}) 
    \leq \Big\langle \nabla \mathcal{F}(\mathbf{w}^{(t)}), \, \mathbf{w}^{(t+1)} - \mathbf{w}^{(t)} \Big\rangle 
    \nonumber \\
    &+\, \frac{L}{2} \left\| \mathbf{w}^{(t+1)} - \mathbf{w}^{(t)} \right\|_\infty^2 
    \label{eq:Lipschitz} \\[0.5em]
    = &- \Big\langle \nabla \mathcal{F}(\mathbf{w}^{(t)}), \, 
    \mu\sum_{q=1}^Q \sum_{\tau=0}^{T_E-1}  \frac{D_q}{N}\operatorname{sgn}\!\big(\hat{\mathbf{g}}_{q}^{(t,\tau)}\big) \Big\rangle
    \nonumber \\%[0.2em]
    &+ \frac{L}{2} 
    \Big\| \mu\sum_{q=1}^Q \sum_{\tau=0}^{T_E-1}  
    \frac{D_q}{N}\operatorname{sgn}\!\big(\hat{\mathbf{g}}_{q}^{(t,\tau)}\big)
    \Big\|_\infty^2.
    \label{eq:twoTerms}
\end{align}

We first bound the second term in~\eqref{eq:twoTerms} by using the triangle inequality as
\begin{align} \label{eq:bound_chi}
    &\frac{L}{2}\mu^2 \Big\| \sum_{q=1}^Q \sum_{\tau=0}^{T_E-1}  \frac{D_q}{N}\operatorname{sgn}\!\big(\hat{\mathbf{g}}_{q}^{(t,\tau)}\big) \Big\|_\infty^2 \nonumber\\
    \le\,\, &\frac{L}{2}\mu^2 \Big( \sum_{q=1}^{Q} \sum_{\tau=0}^{T_E-1} \frac{D_q}{N} \Big)^{2}
=\,\, \frac{L}{2}\big(\mu T_E\big)^2.
\end{align}

We now proceed to bound the more challenging first term of~\eqref{eq:twoTerms}. Let us rewrite the inner product as
\begin{align} \label{eq: 1stterm_rewrite}
    &- \Big\langle \nabla \mathcal{F}(\mathbf{w}^{(t)}), \, 
    \mu\sum_{q=1}^Q \sum_{\tau=0}^{T_E-1}  \frac{D_q}{N}\operatorname{sgn}\!\big(\hat{\mathbf{g}}_{q}^{(t,\tau)}\big) \Big\rangle \nonumber\\
    = &- \mu\sum_{q=1}^Q   \frac{D_q}{N}\sum_{\tau=0}^{T_E-1}\Big\langle \nabla \mathcal{F}(\mathbf{w}^{(t)}),\,\operatorname{sgn}\!\big(\hat{\mathbf{g}}_{q}^{(t,\tau)}\big)\Big\rangle.
\end{align}
For convenience, we temporarily adopt the following notational switches:
\begin{equation} \label{eq: notation_switch}
\begin{aligned}
   \nabla \mathcal{F}(\mathbf{w}^{(t)}) \,&\rightarrow\, \nabla\mathcal{F},
   \quad
   \nabla \mathcal{F}_q(\mathbf{w}^{(t)}) \,\rightarrow\, \nabla\mathcal{F}_q, \\
   \hat{\mathbf{g}}_{q}^{(t,\tau)} \,&\rightarrow\, \hat{\mathbf{g}}_q^{(\tau)},
   \quad {\mathbf{v}}_{q}^{(t,\tau)} \,\rightarrow\, {\mathbf{v}}_q^{(\tau)}.
\end{aligned}
\end{equation}
The goal is to bound $-\sum_{\tau=0}^{T_E-1}\big\langle \nabla \mathcal{F},\,\operatorname{sgn}\!\big(\hat{\mathbf{g}}_q^{(\tau)}\big)\big\rangle$. Let us denote this by $\Omega$ and  recast it as
\begin{align} \label{eq:Omega}
    \Omega \triangleq -\!\!\sum_{\tau=0}^{T_E-1}\Big\langle \nabla \mathcal{F},\,\operatorname{sgn}\!\big(\hat{\mathbf{g}}_q^{(\tau)}\big)\Big\rangle =\, \sum_{\tau=0}^{T_E-1} \Big(\!-\|\nabla \mathcal{F}\|_1 \nonumber \\ +2\sum_{i=1}^d\big|[\nabla \mathcal{F}]_i\big|\,\mathbb{I}\big\{
\underbrace{
\operatorname{sgn}([\nabla \mathcal{F}]_i)\neq 
\operatorname{sgn}([\hat{\mathbf{g}}_q^{(\tau)}]_i)
}_{\mathcal{A}_i^{(\tau)}}
\big\}\Big), 
\end{align}
where $\mathbb{I}\{\cdot\}$ denotes the indicator function. Observe that the anticipated progress in the convergence hinges on the falsity of the event inside the indicator function. For brevity, we have denoted this event by $\mathcal{A}_i^{(\tau)}$.
Taking the expectation conditioned on the previous iterate $\mathbf{w}^{(t)}$ yields
\begin{align} \label{eq:first_expected}
\mathbb{E}\{\Omega \mid \mathbf{w}^{(t)}\}
= \! \sum_{\tau=0}^{T_E-1}\! \Big(\!\! -\|\nabla \mathcal{F}\|_1 
  + 2\sum_{i=1}^d \bigl|[\nabla \mathcal{F}]_i\bigr|\,
\text{Pr}\big\{\mathcal{A}_i^{(\tau)}\big\} \Big).
\end{align}
Hence, if we expect the algorithm to converge, then the local gradients should at least be able to correctly estimate the sign of the global gradient with high probability. The main thing now is to bound this probability. To this end, we use the following relaxation
\begin{align} \label{eq:Markov's}
\text{Pr}\big\{\mathcal{A}_i^{(\tau)}\big\}
 &\leq \,\,\text{Pr}\left\{\big|[\hat{\mathbf{g}}_q^{(\tau)}]_i - [\nabla \mathcal{F}]_i\big| \geq \big|[\nabla \mathcal{F}]_i\big|\right\} \nonumber\\
&\leq\,\, \frac{\mathbb{E}\left\{\big|[\hat{\mathbf{g}}_q^{(\tau)}]_i - [\nabla \mathcal{F}]_i\big|\right\}}{\big|[\nabla \mathcal{F}]_i\big|},
\end{align}
where in the second line, we have employed Markov's inequality~\cite{BoucheronLugosiMassart2013}. Plugging the obtained result into~\eqref{eq:first_expected}, we get
\begin{align} \label{eq:main_eq}
    \eqref{eq:first_expected}\, &\le \sum_{\tau=0}^{T_E-1} \Big(\!-\|\nabla \mathcal{F}\|_1 + 2\sum_{i=1}^d\mathbb{E}\big\{\big|[\hat{\mathbf{g}}_q^{(\tau)}]_i - [\nabla \mathcal{F}]_i\big|\big\}\!\Big) \nonumber\\
    &= \sum_{\tau=0}^{T_E-1} \Big(\!-\|\nabla \mathcal{F}\|_1 \nonumber \\&+ 2\sum_{i=1}^d\mathbb{E}\big\{\big|[\hat{\mathbf{g}}_q^{(\tau)}]_i - [{\mathbf{g}}_q^{(\tau)}]_i + [{\mathbf{g}}_q^{(\tau)}]_i - [\nabla \mathcal{F}]_i\big|\big\}\!\Big) \nonumber \\
    &\le \sum_{\tau=0}^{T_E-1} \Big(\!-\|\nabla \mathcal{F}\|_1 + \frac{2\sigma d}{\sqrt{B}} \nonumber \\& \hspace{6em}+ 2\sum_{i=1}^d\mathbb{E}\big\{\big|[{\mathbf{g}}_q^{(\tau)}]_i - [\nabla \mathcal{F}]_i\big|\big\}\!\Big),
\end{align}
where we have added and subtracted the true local gradient components $[{\mathbf{g}}_q^{(\tau)}]_i$ in the third line, and in the last inequality, we have used the triangle inequality along with the mini-batch gradient assumption A3. 

Let us now define
\begin{align}
    \mathcal{I}^{(\tau)} \triangleq \sum_{i=1}^d\mathbb{E}\big\{\big|[{\mathbf{g}}_q^{(\tau)}]_i - [\nabla \mathcal{F}]_i\big|\big\}.
\end{align}
We attempt to bound $\mathcal{I}^{(\tau)}$ recursively. First, note that
\begin{subequations} \label{eq: I(0)}
\begin{align}
    \mathcal{I}^{(0)} &= \sum_{i=1}^d\mathbb{E}\Big\{\big|[{\mathbf{g}}_q^{(0)}]_i - [\nabla \mathcal{F}]_i\big|\Big\} \ \label{eq: I(0)_1}\\
     &= \sum_{i=1}^d\big|[\nabla \mathcal{F}_q]_i - [\nabla \mathcal{F}]_i\big| = \big\| \nabla \mathcal{F}_q - \nabla \mathcal{F} \big\|_1.  \label{eq: I(0)_2}
\end{align}
\end{subequations}
Thus, the base case is established. We next derive a recursive relation for $\mathcal{I}^{(\tau)}$
\begin{align*}
    \mathcal{I}^{(\tau)} &= \sum_{i=1}^d\mathbb{E}\Big\{\big|[{\mathbf{g}}_q^{(\tau)}]_i - [\nabla \mathcal{F}]_i\big|\Big\} \\
    &= \sum_{i=1}^d\mathbb{E}\Big\{\big|[{\mathbf{g}}_q^{(\tau-1)}]_i - [\nabla \mathcal{F}]_i + [{\mathbf{g}}_q^{(\tau)}]_i - [{\mathbf{g}}_q^{(\tau-1)}]_i\big|\Big\} \\
     &\le\, \sum_{i=1}^d\Bigl(\mathbb{E}\Big\{\big|[{\mathbf{g}}_q^{(\tau-1)}]_i - [\nabla \mathcal{F}]_i\big|\Big\} \\ &\hspace{8em}+ \mathbb{E}\Big\{\big|[{\mathbf{g}}_q^{(\tau)}]_i - [{\mathbf{g}}_q^{(\tau-1)}]_i\big|\Big\}\!\Bigr)  \\
     &=\, \mathcal{I}^{(\tau-1)} 
    + \mathbb{E}\Big\{\big\|\nabla \mathcal{F}_q(\mathbf{v}_q^{(\tau)}) - \nabla \mathcal{F}_q(\mathbf{v}_q^{(\tau-1)}) \big\|_1\Big\} \\
    &\le \mathcal{I}^{(\tau-1)} 
    + \mathbb{E}\Big\{L\big \|\mathbf{v}_q^{(\tau)} - \mathbf{v}_q^{(\tau-1)} \big\|_\infty\Big\} \\
    & = \mathcal{I}^{(\tau-1)} 
    + L\mu,
\end{align*}
    where the last inequality is due to $L$-smoothness of $\mathcal{F}_q$. We have, therefore, obtained
    \begin{align*}
        \mathcal{I}^{(\tau)} \le \mathcal{I}^{(\tau-1)} + L\mu.
    \end{align*}
    Using this result and combining it with~\eqref{eq: I(0)} yields
    \begin{align} \label{eq:I(t)}
        \mathcal{I}^{(\tau)} &\le \mathcal{I}^{(0)}  + \tau(L\mu) \nonumber \\ &= \left\| \nabla \mathcal{F}_q(\mathbf{w}^{(t)}) - \nabla \mathcal{F}(\mathbf{w}^{(t)}) \right\|_1 + \tau L\mu,
    \end{align}
    where we have returned to the original notation from which we temporarily deviated in~\eqref{eq: notation_switch}. Plugging~\eqref{eq:I(t)} back into~\eqref{eq:main_eq}, we get 
    \begin{align*}
        \mathbb{E}\{\Omega \mid \mathbf{w}^{(t)}\} &\le -T_E\|\nabla \mathcal{F}\|_1 + T_E\Big(\frac{2\sigma d}{\sqrt{B}}\Big) \\& \hspace{-5em}+ 2T_E\left\| \nabla \mathcal{F}_q(\mathbf{w}^{(t)}) - \nabla \mathcal{F}(\mathbf{w}^{(t)}) \right\|_1 + T_E(T_E-1)L\mu.
    \end{align*}
    With this result, we can bound~\eqref{eq: 1stterm_rewrite} after taking expectation
    \begin{align*}
    &\mathbb{E} \Big\{ \!\!- \mu\sum_{q=1}^Q   \frac{D_q}{N}\sum_{\tau=0}^{T_E-1}\Big\langle \nabla \mathcal{F}(\mathbf{w}^{(t)}),\,\operatorname{sgn}\!\big(\hat{\mathbf{g}}_{q}^{(t,\tau)}\big)\Big\rangle \Big\}
        \\ &\le \mu\Big(\!\!-T_E\|\nabla \mathcal{F}\|_1 + \frac{2\sigma dT_E}{\sqrt{B}}+ 2T_E\,\zeta + T_E(T_E-1)L\mu\Big).
    \end{align*}
    Combining this with the bounded second term, we get
    \begin{align*}
        \frac{\mathbb{E}\left\{\mathcal{F}(\mathbf{w}^{(t+1)}) - \mathcal{F}(\mathbf{w}^{(t)})\mid\mathbf{w}^{(t)}\right\}}{\mu T_E} \le -\|\nabla \mathcal{F}(\mathbf{w}^{(t)})\|_1 + C,
    \end{align*}
    where
\[
C \,=\, 2\zeta \,+\, \frac{2\sigma d}{\sqrt{B}} \,+\, \big(\frac{3T_E}{2}-1\big)L\mu.
\]
    Finally, we extend the expectation over the randomness in the process, apply a telescoping sum over the iterations, and rearrange to achieve the averaged performance bound
\begin{align*} 
\frac{1}{T_G}\!\sum_{t=0}^{T_G-1} 
\mathbb{E}\!\left\{\big\|\nabla \mathcal{F}(\mathbf{w}^{(t)})\big\|_{1}\right\}
\;\le\;
\frac{\mathcal{F}(\mathbf{w}^{(0)})-\mathcal{F}^\star}{\mu T_GT_E }
\;+\; C.
\end{align*}

\section{Proof of Theorem~\ref{thm:2nd}} \label{app:thm2}
To avoid repeating the entire proof, we apply only the necessary modifications to the proof of Theorem~\ref{thm:1st}. Since the devices now transmit the signs of the corrected gradients, the new probability of an incorrect sign estimate in \eqref{eq:Omega} will be
\begin{align*} 
\text{Pr}\left\{\operatorname{sgn}\!\big([\nabla \mathcal{F}]_i\big)\neq \operatorname{sgn}\!\big([\hat{\mathbf{g}}_q^{(\tau)}]_i+\rho[\boldsymbol{\delta}_q]_i\big)\right\} , 
\end{align*} 
where $\boldsymbol{\delta}_q=\boldsymbol{\delta}_q^{(t-1)}=\mathbf{c}^{(t-1)}-\mathbf{c}_q^{(t-1)}$. Following steps similar to those in \eqref{eq:main_eq}, we ultimately need to bound the following quantity:
\begin{align} 
&\sum_{i=1}^d\mathbb{E}\big\{\big|[{\mathbf{g}}_q^{(\tau)}]_i - [\nabla \mathcal{F}]_i+\rho[\boldsymbol{\delta}_q]_i\big|\big\}. \tag{\ensuremath{\clubsuit}} \label{eq:club} 
\end{align} 
Inserting the values from~\eqref{eq:edge_anchor}--\eqref{eq:global_anchor}, and adding and subtracting $[\nabla \mathcal{F}_q(\mathbf{w}^{(t)})]_i$, yields
\begin{align*} 
\eqref{eq:club} &= \sum_{i=1}^d \mathbb{E} \Big\{ \!\big| [{\mathbf{g}}_q^{(\tau)}]_i - [\nabla \mathcal{F}_q(\mathbf{w}^{(t)})]_i -[\nabla \mathcal{F}(\mathbf{w}^{(t)})]_i  \\ 
& + \rho[\nabla \mathcal{F}(\mathbf{w}^{(t-1)})]_i+ [\nabla \mathcal{F}_q(\mathbf{w}^{(t)})]_i- \rho[\nabla \mathcal{F}_q(\mathbf{w}^{(t-1)})]_i \big| \!\Big\}\\ 
&\le 2\rho\mathbb{E}\Big\{L\big \|\mathbf{w}^{(t)} - \mathbf{w}^{(t-1)} \big\|_\infty\Big\}+(1-\rho)\times\\
&\quad\,\sum_{i=1}^d \mathbb{E} \Big\{ \big| [\nabla \mathcal{F}_q(\mathbf{w}^{(t)})]_i - [\nabla \mathcal{F}(\mathbf{w}^{(t)})]_i\big|\Big\}\\
&+\sum_{i=1}^d \mathbb{E} \Big\{ \big| [{\mathbf{g}}_q^{(\tau)}]_i - [\nabla \mathcal{F}_q(\mathbf{w}^{(t)})]_i\big|\Big\} \\ 
&\le\! 2\rho L\mu T_E \!+\! (1-\rho)\zeta+\!\sum_{i=1}^d \mathbb{E} \Big\{ \!\big| [{\mathbf{g}}_q^{(\tau)}]_i - [\nabla \mathcal{F}_q(\mathbf{w}^{(t)})]_i\big|\!\Big\},
\end{align*} 
where we have used \eqref{eq: w_update} and the $L$-smoothness of the functions. Defining
\begin{align} 
\widetilde{\mathcal{I}}^{(\tau)} \triangleq \sum_{i=1}^d\mathbb{E}\big\{\big|[{\mathbf{g}}_q^{(\tau)}]_i - [\nabla \mathcal{F}_q(\mathbf{w}^{(t)})]_i\big|\big\}, 
\end{align} 
and applying the same recursive argument as before, we obtain
$\widetilde{\mathcal{I}}^{(\tau)} \le  \tau L\mu$ with
$\widetilde{\mathcal{I}}^{(0)}=0$.  Substituting this result into \eqref{eq:club} and continuing the analysis as in Appendix~\ref{app:thm1} completes the proof.

\section{Majority-Vote Error Analysis} \label{App_A}
In this section, we generalize the results of Theorems~\ref{thm:1st}--\ref{thm:2nd} to the case where each edge server manages a cluster of $M$ devices. We reiterate that the convergence of our proposed sign-based algorithms in the case of a single-device edge server primarily hinges on the probability of an incorrect sign estimate remaining bounded:
\begin{align} \label{eq:sinProb}
    P_e \triangleq \text{Pr}\left\{\operatorname{sgn}\big([\nabla \mathcal{F}]_i\big)\neq \operatorname{sgn}\big([\hat{\mathbf{g}}_q^{(\tau)}]_i+[\boldsymbol{\delta}_q]_i\big)\right\},
\end{align}
where $\boldsymbol{\delta}_q$ denotes an arbitrary gradient correction term, including $\boldsymbol{\delta}_q=0$. We further bounded $P_e$ using a relaxation argument and Markov's inequality:
\begin{align} \label{eq:bound again}
    P_e &\leq \frac{\mathbb{E}\left\{\big|[\hat{\mathbf{g}}_q^{(\tau)}]_i+[\boldsymbol{\delta}_q]_i - [\nabla \mathcal{F}]_i\big|\right\}}{\big|[\nabla \mathcal{F}]_i\big|} \nonumber\\
    &= \frac{\mathbb{E}\left\{\left|[\hat{\mathbf{g}}_q^{(\tau)}]_i - {[{\mathbf{g}}_q^{(\tau)}]_i} + {[{\mathbf{g}}_q^{(\tau)}]_i} +[\boldsymbol{\delta}_q]_i- [\nabla \mathcal{F}]_i\right|\right\}}{\big|[\nabla \mathcal{F}]_i\big|}\nonumber\\
    &\le \frac{\nu}{\big|[\nabla \mathcal{F}]_i\big|} + \frac{\mathbb{E}\left\{|\beta|\right\}}{\big|[\nabla \mathcal{F}]_i\big|} =: \psi, \nonumber
\end{align}
where \(\psi\) denotes the bound on $P_e$, $\beta = {[{\mathbf{g}}_q^{(\tau)}]_i} +[\boldsymbol{\delta}_q]_i - [\nabla \mathcal{F}]_i$, and $\nu^2 = \mathbb{E}\Big\{\big|[\hat{\mathbf{g}}_q^{(\tau)}]_i - {[{\mathbf{g}}_q^{(\tau)}]_i}\big|^2 \Big\}$.
\smallskip

For a given cluster with \(M=|\mathcal{V}^q|\) devices, the majority-vote error probability becomes
\begin{align*} %\label{eq:MajProb}
    P_e^{(M)} = \text{Pr}\Big\{\!\operatorname{sgn}\!\big([\nabla \mathcal{F}]_i\big)\neq\operatorname{sgn}\!\Big(\!\sum_{k \in \mathcal{V}^q} [\widetilde{\mathbf{s}}_{qk}^{(\tau)}]_i\Big)\Big\},
\end{align*}
where $\widetilde{\mathbf{s}}_{qk}^{(\tau)}=\operatorname{sgn}(\hat{\mathbf{g}}_{qk}^{(\tau)}+\boldsymbol{\delta}_q)$.
Similar to \cite{bernstein2018signsgd}, we argue that this probability is bounded by the same threshold as $P_e$ is. In other words, if $P_e \le \psi$, then $P_e^{(M)}\le \psi$. There are two cases to consider: when $\psi \ge 1$, the inequality $P_e^{(M)} \le \psi$ is immediate; when $\psi < 1$, more careful examination is required. Without loss of generality assume $\operatorname{sgn}([\nabla \mathcal{F}]_i) = -1$ and let us define
\begin{align*}
    X \triangleq {[\hat{\mathbf{g}}_q^{(\tau)}]_i} +[\boldsymbol{\delta}_q]_i.
\end{align*}
From \eqref{eq:sinProb}, we have
\begin{align*} %\label{eq:Cantelli}
    P_e &= \text{Pr}\left\{X > 0\right\} =  \text{Pr}\big\{X-\mathbb{E}\{X\} > -\mathbb{E}\{X\}\big\}.
\end{align*}
Now, note that
\begin{align*}
    - \mathbb{E}\{X\}&= -\mathbb{E}\{\beta\}-[\nabla \mathcal{F}]_i = -\mathbb{E}\{\beta\}+\big|[\nabla \mathcal{F}]_i\big| \\ 
    &\ge -\mathbb{E}\{|\beta|\}+\big|[\nabla \mathcal{F}]_i\big| \, > \, \nu \ge 0,
\end{align*}
where the last inequality is simply due to $\psi < 1$. Hence, $-\mathbb{E}\{X\} > 0$ and we can use Cantelli's inequality~\cite{BoucheronLugosiMassart2013} to obtain a tight one-sided tail bound:
\begin{align*}
    P_e &=  \text{Pr}\big\{X-\mathbb{E}\{X\} > -\mathbb{E}\{X\}\big\} \\
    &\le \frac{\nu^2}{\nu^2+\big(\mathbb{E}\{X\}\big)^2} < \frac{\nu^2}{\nu^2+\nu^2} = \frac{1}{2}.
\end{align*}
Therefore, we have obtained $P_e < 1/2$ when $\psi < 1$ (the same analysis holds when $[\nabla \mathcal{F}]_i$ is positive). As we shall see, this result is crucial for the final step of the argument.

We now draw an analogy to our problem by considering a fundamental channel coding scenario. Suppose a single information bit (0 or 1) is repeatedly transmitted over a noisy channel. In this setting, the maximum a posteriori (MAP) detector, optimal for minimizing the probability of detection error \cite{kay_estimation}, reduces to a simple majority-vote rule applied to the received samples as the following illustrates.
Let $b$ be the bit sent, repeated $M$ times. Under the conditional independence assumption on the device-level sign errors, the channel flips each bit independently with probability $P_e$. Let $r$ be what we receive, and let $n$ be the number of $1$'s in $r$.

\begin{itemize}
    \item If $b = 0$ was sent, the number of errors equals $n$:
    \[
    \text{Pr}\{r \mid b=0\} = P_e^{n} (1-P_e)^{M-n}.
    \]
    \item If $b = 1$ was sent, the number of errors equals $M-n$:
    \[
    \text{Pr}\{r \mid b=1\} = P_e^{M-n} (1-P_e)^{n}.
    \]
\end{itemize}
With an equiprobable prior on $b$, the MAP detector chooses the $b$ with larger likelihood:
\[
\frac{\text{Pr}\,\{r \mid b=1\}}{\text{Pr}\,\{r \mid b=0\}}
= \left( \frac{1-P_e}{P_e} \right)^{2n - M}.
\]
Since we have established that $P_e < 1/2$, the ratio is $> 1$ exactly when $n>M/2$. This is precisely \emph{majority-vote}.
For the case $n = M/2$, the likelihood ratio equals $1$, meaning that the MAP rule has no preference between $0$ and $1$ and may choose either value, for example by random tie-breaking.
We reiterate that MAP is optimal, meaning that no alternative decoding rule can achieve a smaller probability of error than majority-vote, including the crude decoder that examines only the first received bit and outputs it. In other words, $P_e^{(M)} \le P_e$, and the bounds in Theorems~\ref{thm:1st}--\ref{thm:2nd}, which are looser than $\psi$, also hold for the majority-vote case.

\bibliographystyle{IEEEtran}
\bibliography{IEEEabrv,Ref}

\end{document}